\renewcommand{\Intchoice}{\bigsqcap}
\newcommand{\thmeqref}[2]{\ref{#1}-\ref{#2}}
\begin{document}
\title{Calculational Verification of Reactive Programs \\ with Reactive Relations and Kleene Algebra}
\titlerunning{Calculational Verification of Reactive Programs}

\author{Simon Foster$^{\text{\href{https://orcid.org/0000-0002-9889-9514}{ORCiD}}}$ \and Kangfeng Ye \and Ana Cavalcanti \and Jim Woodcock}
\institute{University of York \\ \email{simon.foster@york.ac.uk}}

\authorrunning{Simon Foster \and Kangfeng Ye \and Ana Cavalcanti \and Jim Woodcock}

% make the title area
\maketitle

% As a general rule, do not put math, special symbols or citations
% in the abstract
\begin{abstract}
  Reactive programs are ubiquitous in modern applications, and so verification is highly desirable. We present a
  verification strategy for reactive programs with a large or infinite state space utilising algebraic laws for reactive
  relations. We define novel operators to characterise interactions and state updates, and an associated equational
  theory. With this we can calculate a reactive program's denotational semantics, and thereby facilitate automated
  proof. Of note is our reasoning support for iterative programs with reactive invariants, which is supported by Kleene
  algebra. We illustrate our strategy by verifying a reactive buffer. Our laws and strategy are mechanised in
  Isabelle/UTP, which provides soundness guarantees, and practical verification support.
\end{abstract}

% no keywords

\section{Introduction}
\label{sec:intro}
Reactive programming~\cite{Harel1985,Bainomugisha2013} is a paradigm that enables effective description of software
systems that exhibit both internal sequential behaviour and event-driven interaction with a concurrent party. Reactive
programs are ubiquitous in safety-critical systems, and typically have a very large or infinite state space. Though
model checking is an invaluable verification technique, it exhibits inherent limitations with state explosion and
infinite-state systems that can be overcome by supplementing it with theorem proving.

Previously~\cite{Foster17c}, we have shown how \emph{reactive contracts} support automated proof. They follow the
design-by-contract paradigm~\cite{Meyer92}, where programs are accompanied by pre- and postconditions. Reactive programs
are often non-terminating and so we also capture intermediate behaviours, where the program has not terminated, but is
quiescent and offers opportunities to interact. Our contracts are triples, $\rc{\!P_1\!}{\!P_2\!}{\!P_3\!}$, where $P_1$
is the precondition, $P_3$ the postcondition, and $P_2$ the ``pericondition''. $P_2$ characterises the quiescent
observations in terms of the interaction history, and the events enabled at that point.

Reactive contracts describe communication and state updates, so $P_1$, $P_2$, and $P_3$ can refer to both a trace
history of events and internal program variables. They are, therefore, called ``reactive relations'': like relations
that model sequential programs, they can refer to variables before ($x$) and later ($x'$) in execution, but also the
interaction trace ($\trace$), in both intermediate and final observations.

% For example, the following contract specifies a simple vending machine that dispenses chocolate over and
% over until it is turned off, in terms of a state variable $coins : \nat$, and abstract events $choc$ and $off$.
% $$\rc{coins = 0}{\trace \in \langle choc \rangle \star}{(\exists n @ \trace = \langle choc \rangle^n \cat \langle off
%   \rangle \land coins' = n)}$$ We require initially that $coins$ is 0. In every intermediate
% observation, $choc$ has occurred zero or more times, and the value of $coins$ is invisible. In the final state, $choc$
% has occurred $n$ times, followed by an $off$ event, and $coins$ is $n$.

Verification using contracts employs refinement ($\refinedby$), which requires that an implementation weakens the
precondition, and strengthens both the peri- and postcondition when the precondition holds. We employ the
``programs-as-predicates'' approach~\cite{Hehner93}, where the implementation ($Q$) is itself denoted as a composition
of contracts. Thus, a verification problem, $\rc{\!P_1\!}{\!P_2\!}{\!P_3\!} \refinedby Q$, can be solved by calculating
a program $\rc{\!Q_1\!}{\!Q_2\!}{\!Q_3\!} = Q$, and then discharging three proof obligations: (1) $Q_1 \refinedby P_1$;
(2) $P_2 \refinedby (Q_2 \land P_1)$; and (3) $P_3 \refinedby (Q_3 \land P_1)$. These can be further decomposed, using
relational calculus, to produce verification conditions. In \cite{Foster17c} we employ this strategy in an Isabelle/HOL
tactic.

% It is evident that verification of reactive programs reduces to reasoning about reactive relations. 
For reactive programs of a significant size, these relations are complex, and so the resulting proof obligations are
difficult to discharge using relational calculus. We need, first, abstract patterns so that the relations can be
simplified. This necessitates bespoke operators that allow us to concisely formulate the different kinds of
observation. Second, we need calculational laws to handle iterative programs, which are only partly handled in our
previous work~\cite{Foster17c}.

%We apply this approach to the \Circus modelling language~\cite{Oliveira&09} which combines state modelling using Z and
%reactive primitives from CSP.

% show that our theories of reactive relations and contracts inhabit it, and derive calculational laws for iterative contracts

In this paper we present a novel calculus for description, composition, and simplification of reactive relations in the
stateful failures-divergences model~\cite{Hoare&98,Oliveira&09}. We characterise conditions, external interactions, and
state updates. An equational theory allows us to reduce pre-, peri-, and postconditions to compositions of these atoms
using operators of Kleene algebra~\cite{Kozen90} (KA) and utilise KA proof techniques. Our theory is characterised in
the Unifying Theories of Programming~\cite{Hoare&98,Cavalcanti&06} (UTP) framework. For that, we identify a class of UTP
theories that induce KAs, and utilise it in derivation of calculational laws for iteration. We use our UTP
mechanisation, called Isabelle/UTP~\cite{Foster16a}, to implement an automated verification approach for infinite-state
systems with rich data structures.

%  in Kleene algebra, UTP, and reactive contracts

The paper is structured as follows. \S\ref{sec:prelim} outlines preliminary material. \S\ref{sec:utp-kleene} identifies
a class of UTP theories that induce KAs, and applies this for calculation of iterative contracts. \S\ref{sec:circus-rc}
specialises reactive relations with new atomic operators to capture stateful failures-divergences, and derives their
equational theory. \S\ref{sec:ext-choice} extends this with support for calculating external choices. \S\ref{sec:iter}
completes the theoretical picture with while loops and reactive invariants. \S\ref{sec:verify} demonstrates the
resulting proof strategy in a small verification. \S\ref{sec:concl} outlines related work and concludes. All our
theorems have been mechanically verified in Isabelle/UTP\footnote{All proofs can be found in the cited series of
Isabelle/HOL reports. For historical reasons, we use the syntax $\ckey{R}_s(P \vdash Q \diamond R)$ in our mechanisation
for a contract $\rc{\!P\!}{\!Q\!}{\!R\!}$, which builds on Hoare and He's original syntax for the theory of
designs~\cite{Hoare&98}.}~\cite{Foster16a,Foster-KA-UTP,Foster-RDES-UTP,Foster-SFRD-UTP}.

% We also examine the related class of productive contracts, and prove distribution laws.

\section{Preliminaries}
\label{sec:prelim}
\textbf{Kleene Algebras}~\cite{Kozen90} (KA) characterise sequential and iterative behaviour in nondeterministic
programs using a signature $(K, +, 0, \cdot, 1, {}\star)$, where $+$ is a choice operator with unit $0$, and $\cdot$ a
composition operator, with unit $1$. Kleene closure $P{\star}$ denotes iteration of $P$ using $\cdot$ zero or more
times. We consider the class of weak Kleene algebras~\cite{Guttman2010}, which build on weak dioids.

% As usual in algebra, the operator $\cdot$ binds tighter than $+$.
% (i.e. idempotent semirings)

\begin{definition}
  A weak dioid is a structure $(K, +, 0, \cdot, 1)$ such that
  $(S, +, 0)$ is an idempotent and commutative monoid; $(S, \cdot, 1)$
  is a monoid; $\cdot$ left- and right-distributes over $+$; and $0$
  is a left annihilator for $\cdot$.
\end{definition}
The $0$ operator represents miraculous behaviour. It is a left annihilator of composition, but not a right annihilator
as this often does not hold for programs. $K$ is partially ordered by $x \le y \defs (x + y = y)$, which is defined in
terms of $+$, and has least element $0$. A weak KA extends this with the behaviour of the star.
\begin{definition} \label{def:wka}
  A weak Kleene algebra is a structure $(K\!, +, 0, \cdot, 1, \star)$ such that

  \vspace{-3.4ex}
  \begin{center}
  \begin{tabular}{ll}
  \begin{minipage}{0.5\textwidth}
  \begin{enumerate}
    \item $(K, +, 0, \cdot, 1)$ is a weak dioid
    \item $1+x \cdot x{\star} \le x{\star}$
  \end{enumerate}
  \end{minipage} &
  \begin{minipage}{0.5\textwidth}
  \begin{enumerate}
    \setcounter{enumi}{2}
    \item $z + x \cdot y \le y \implies x{\star} \cdot z \le y$ \label{thm:starinductr}
    \item $z + y \cdot x \le y \implies z \cdot x{\star} \le y$
  \end{enumerate}
  \end{minipage}  
  \end{tabular}
  \end{center}
\end{definition}
Various enrichments and specialisations of these axioms exist; for a complete survey see~\cite{Kozen90}. For our
purposes, these axioms alone suffice. From this base, a number of useful identities can be derived, some of which are
listed below.

\begin{theorem} \label{thm:kalaws} $x{{\star}{\star}} = x{\star}$ ~~ $x{\star} = 1 + x\cdot x\star$ ~~ $(x+y){\star} = (x{\star} \cdot y{\star}){\star}$ ~~ $x \cdot x{\star} = x{\star} \cdot x$ \end{theorem}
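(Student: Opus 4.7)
The plan is to derive all four identities directly from the axioms of Definition \ref{def:wka}, after first assembling a short toolkit of consequences: monotonicity of $+$ and $\cdot$ (from the dioid axioms), the inequalities $1 \le x{\star}$ and $x \le x{\star}$ (from axiom 2, using $x = x \cdot 1 \le x \cdot x{\star}$ and $1 \le 1 + x \cdot x{\star} \le x{\star}$), the contraction $x{\star} \cdot x{\star} \le x{\star}$ (apply axiom 3 with $z := x{\star}$, $y := x{\star}$, noting $x{\star} + x \cdot x{\star} \le x{\star}$ by axiom 2), and from these the monotonicity of $\star$ itself (apply axiom 3 to $y{\star}$, using $x \le y \le y{\star}$).

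I would then tackle the identities in order of increasing intricacy. For the second identity $x{\star} = 1 + x \cdot x{\star}$, the $\ge$ direction is immediate from axiom 2. For $\le$, apply the left-induction axiom 3 with $z := 1$ and $y := 1 + x \cdot x{\star}$; by distributivity this reduces to verifying $1 + x + x \cdot x \cdot x{\star} \le 1 + x \cdot x{\star}$, which follows by monotonicity from $x \le x \cdot x{\star}$ and $x \cdot x \cdot x{\star} \le x \cdot x{\star}$ (the latter from axiom 2). A symmetric argument using axiom 4 yields the dual unfold $x{\star} = 1 + x{\star} \cdot x$, which I would record for later use.

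For $x{{\star}{\star}} = x{\star}$, the $\ge$ direction uses $1 \le x{{\star}{\star}}$ together with the contraction to get $x{\star} \le x{\star} \cdot x{{\star}{\star}} \le x{{\star}{\star}}$. For $\le$, apply axiom 3 with the star-argument replaced by $x{\star}$, $z := 1$ and $y := x{\star}$; this reduces to $1 + x{\star} \cdot x{\star} \le x{\star}$, which holds by the toolkit. For the denesting $(x+y){\star} = (x{\star} \cdot y{\star}){\star}$, both directions are monotonicity of $\star$ combined with the first identity: $\le$ uses $x + y \le x{\star} \cdot y{\star}$, while $\ge$ uses $x{\star}, y{\star} \le (x+y){\star}$, hence $x{\star} \cdot y{\star} \le (x+y){\star}$ by the contraction, and then $(x{\star} \cdot y{\star}){\star} \le (x+y){{\star}{\star}} = (x+y){\star}$.

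The main obstacle is the sliding law $x \cdot x{\star} = x{\star} \cdot x$. A clean route is to prove the general sliding $(x \cdot y){\star} \cdot x = x \cdot (y \cdot x){\star}$ and then specialise to $y := 1$. For one direction, apply axiom 4 with $y := x \cdot (y \cdot x){\star}$, $z := x$ and star-argument $x \cdot y$; the resulting verification condition rearranges, via associativity and the right-unfold of $(y \cdot x){\star}$ established in the previous step, to a consequence of axiom 2. The other direction is dual, using axiom 3. Checking that the two induction axioms cooperate cleanly to yield both inequalities of sliding—without needing any additional closure properties beyond what Definition \ref{def:wka} provides—is the step I expect to require the most care.
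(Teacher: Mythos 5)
Your proposal is essentially correct, but it takes a genuinely different route from the paper: the paper offers no hand proof of Theorem~\ref{thm:kalaws} at all, instead inheriting these identities from the mechanised Kleene-algebra hierarchy in Isabelle/HOL (the same library invoked in the proof of Theorem~\ref{thm:kautp}), whereas you give an elementary derivation from the axioms of Definition~\ref{def:wka}. Your toolkit ($1 \le x{\star}$, $x \le x{\star}$, contraction $x{\star}\cdot x{\star} \le x{\star}$, monotonicity of ${\star}$), the two unfold equalities, the idempotence and denesting arguments, and the reduction of $x \cdot x{\star} = x{\star}\cdot x$ to general sliding $(x\cdot y){\star}\cdot x = x\cdot(y\cdot x){\star}$ with $y:=1$ are all standard and go through in a weak KA, since both induction axioms (3) and (4) are available; you are also right that the derived right-unfold $x{\star} = 1 + x{\star}\cdot x$ is needed for one direction of sliding. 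The one slip is in your instantiation for sliding: to bound $(x\cdot y){\star}\cdot x$ above by $x\cdot(y\cdot x){\star}$ you must use axiom~(3) (whose conclusion has the shape $x{\star}\cdot z \le y$) with star-argument $x\cdot y$ and $z := x$, the premise reducing via $x + x\cdot(y\cdot x\cdot(y\cdot x){\star}) = x\cdot(1 + (y\cdot x)\cdot(y\cdot x){\star}) \le x\cdot(y\cdot x){\star}$; the converse inequality $x\cdot(y\cdot x){\star} \le (x\cdot y){\star}\cdot x$ uses axiom~(4) with star-argument $y\cdot x$ and $z := x$, its premise requiring the right-unfold of $(x\cdot y){\star}$. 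As written, ``axiom 4 with star-argument $x\cdot y$ and $z := x$'' would conclude $x\cdot(x\cdot y){\star} \le \cdots$, which is not one of the two needed inequalities; with the axioms swapped as above the argument closes. What your approach buys is a self-contained, library-free justification; what the paper's buys is a soundness guarantee for free from an existing mechanisation, which is the point of Theorem~\ref{thm:kautp}.
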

% \begin{center}
% \begin{tabular}{cccc}
% $x{{\star}{\star}} = x{\star}$ ~~~ & $x{\star} = 1 + x\cdot x\star$ ~~~ & $(x+y){\star} = (x{\star} \cdot y{\star}){\star}$ ~~~ & $x \cdot x{\star} = x{\star} \cdot x$
% % $\begin{aligned}
% % %  1\star           &= 0\star = 1 \\
% %   x{{\star}{\star}}&= x{\star} \\
% %   x{\star}         &= 1 + x\cdot x\star
% % \end{aligned}$ & \qquad
% % $\begin{aligned}
% %   x \cdot x{\star} &= x{\star} \cdot x \\
% %   (x+y){\star}     &= (x{\star} \cdot y{\star}){\star}
% % \end{aligned}$
% \end{tabular}
% \end{center}
%
% \noindent Kleene Algebra with Tests~\cite{Armstrong2015,Gomes2016} (KAT) extends the algebra with conditions. A test is
% a kind of assumption that entails miraculous behaviour if a condition is violated, and is otherwise ineffectual. The set
% of tests $T$ are those elements $a, b \in K$ below the identity: $a \le 1$, over which a Boolean algebra is defined.
% Tests enjoy a number of additional properties.
% %
% \begin{theorem}[Test Laws]
% \begin{align*}
%   a \cdot 0 = 0 \quad a \cdot b = b \cdot a \quad a\star = 1
% \end{align*}
% \end{theorem}
%
\noindent\textbf{UTP}~\cite{Hoare&98,Cavalcanti&06} uses the ``programs-as-predicate'' approach to encode denotational
semantics and facilitate reasoning about programs. It uses the alphabetised relational calculus, which combines
predicate calculus operators like disjunction ($\lor$), complement ($\neg$), and quantification ($\exists x \!@ P(x)$),
with relation algebra, to denote programs as binary relations between initial variables ($x$) and their subsequent
values ($x'$). The set of relations $\Rel$ is partially ordered by refinement $\refinedby$ (refined-by), denoting
universally closed reverse implication, where $\false$ refines every relation. Relational composition ($\relsemi$)
denotes sequential composition with identity $\II$. We summarise the algebraic properties of relations below.
\begin{theorem} $(Rel, \mathrel{\sqsupseteq}, \false, \relsemi, \II)$ is a Boolean quantale~\cite{Moller2006}, so that:
\begin{enumerate}
  \item $(Rel, \refinedby)$ is a complete lattice, with infimum $\bigvee$, supremum $\bigwedge$, greatest element
     $\false$, least element $\true$, and weakest (least) fixed-point operator $\mu F$;
  \item $(Rel, \lor, \false, \land, \true, \neg)$ is a Boolean algebra;
  \item $(Rel, \relsemi, \II)$ is a monoid with $\false$ as left and right annihilator;
  \item $\relsemi$ distributes over $\bigvee$ from the left and right.
\end{enumerate}
\end{theorem}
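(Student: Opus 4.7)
The plan is to work at the level of the underlying predicates: relations over a fixed alphabet are in bijection with subsets of the space of before/after variable valuations, and essentially all the required structure is inherited from the ambient Boolean algebra of subsets (equivalently, predicates under logical equivalence).

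For parts (1) and (2), once an alphabet is fixed, $\Rel$ is isomorphic as an ordered set under implication to the powerset of the initial--final state space. This powerset is already a complete Boolean algebra, with $\land$ as meet, $\lor$ as join, $\neg$ as complement, $\true$ as top, and $\false$ as bottom. Reversing the order via $P \refinedby Q \iff [Q \Rightarrow P]$ flips top and bottom and swaps meets with joins: under $\refinedby$ the infimum of a family is its disjunction and the supremum its conjunction, so $\bigvee$ is infimum, $\bigwedge$ supremum, $\false$ the greatest element, and $\true$ the least. Existence of $\mu F$ for monotone $F$ then follows from Knaster--Tarski on this complete lattice, and the Boolean algebra structure of (2) is simply the original unreversed predicate structure.

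For part (3) I would unfold the usual definition $P \relsemi Q \defs \exists v_0 \!@ P[v_0/v'] \land Q[v_0/v]$. Associativity follows by renaming bound intermediate variables and merging existentials; the identity $\II \defs (v' = v)$ acts as left and right unit by the one-point law. Both annihilation properties $\false \relsemi P = \false$ and $P \relsemi \false = \false$ hold because the body of the existential is then identically false. Part (4) is then the standard observation that $\exists$ distributes over arbitrary $\bigvee$ (disjunction at the predicate level), giving $P \relsemi (\bigvee_{i} Q_i) = \bigvee_{i} (P \relsemi Q_i)$ and symmetrically on the right.

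The only genuinely delicate point is tracking the reversed order: $\refinedby$ is reverse implication, so one must remember that $\bigvee$ denotes infimum (while still being disjunction at the predicate level), $\bigwedge$ denotes supremum, and $\false$ is the greatest rather than the least element. Everything else reduces to routine predicate calculus, which in Isabelle/UTP is discharged by unfolding the relational definitions and invoking the relational tactic.
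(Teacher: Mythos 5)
Your proposal is correct. The paper itself gives no proof of this theorem: it is stated as a standard, known result with a citation to M\"{o}ller's work on Boolean quantales, and in the mechanisation it is discharged exactly as you describe, by unfolding the relational definitions into predicate calculus and letting the automation do the rest. Your elementary argument --- relations as subsets of the before/after valuation space, the powerset complete Boolean algebra, reversal of the order under $\refinedby$ with Knaster--Tarski for $\mu F$, the one-point rule for $\II$ as unit, falsity of the existential body for both annihilation laws, and distribution of $\exists$ over arbitrary disjunctions for part (4) --- is exactly the right decomposition, and you correctly isolate the one point that actually trips people up, namely that $\bigvee$ is the \emph{infimum} and $\false$ the \emph{greatest} element once the order is reversed. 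Note also that your observation that $P \relsemi \false = \false$ holds for plain relations is consistent with the paper's earlier remark that right annihilation fails only for the weak dioids arising from richer program models such as designs, not for $\Rel$ itself.
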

We often use $\Intchoice_{i \in I}\, P(i)$ to denote an indexed disjunction over $I$, which intuitively refers to a
nondeterministic choice. Note that the partial order $\le$ of the Boolean quantale is $\mathrel{\sqsupseteq}$, and so
our lattice operators are inverted: for example, $\bigvee$ is the infimum with respect to $\refinedby$, and $\mu F$ is
the least fixed-point.

Relations can directly express sequential programs, whilst enrichment to characterise more advanced paradigms --- such
as object orientation~\cite{SCS06}, real-time~\cite{Sherif2010}, and concurrency~\cite{Hoare&98} --- can be achieved
using UTP theories. A UTP theory is characterised as the set of fixed-points of a function
$\healthy{H} : \Rel \to \Rel$, called a healthiness condition. If $P$ is a fixed-point of $\healthy{H}$ it is said to be
$\healthy{H}$-healthy, and the set of healthy relations is $\theoryset{H} \defs \{ P | \healthy{H}(P) = P \}$. In UTP,
it is desirable that $\healthy{H}$ is idempotent and monotonic so that $\theoryset{H}$ forms a complete lattice under
$\refinedby$, and thus reasoning about both nondeterminism and recursion is possible.

Theory engineering and verification of programs using UTP is supported by Isabelle/UTP~\cite{Foster16a}, which provides
a shallow embedding of the relational calculus on top of Isabelle/HOL, and various approaches to automated proof. In
this paper, we use a UTP theory to characterise reactive programs.

\vspace{1ex}

\noindent\textbf{Reactive Programs.} Whilst sequential programs determine the relationship between an initial
and final state, reactive programs also pause during execution to interact with the environment. For example, the
CSP~\cite{Hoare85,Cavalcanti&06} and \Circus~\cite{Oliveira&09} languages can model networks of concurrent processes
that communicate using shared channels. Reactive behaviour is described using primitives such as event prefix
$a\!\then\!P$, which awaits event $a$ and then enables $P$; conditional guard, $b \guard P$, which enables $P$ when $b$
is true; external choice $P\!\extchoice\!Q$, where the environment resolves the choice by communicating an initial event
of $P$ or $Q$; and iteration $\ckey{while}~b~\ckey{do}~P$.  Channels can carry data, and so events can take the form of
an input ($c?x$) or output ($c!v$). \Circus processes also have local state variables that can be assigned ($x :=
v$). We exemplify \Circus with an unbounded buffer.
\begin{example} \label{ex:buffer} In the $Buffer$ process below, variable $bf : \seq \nat$ records the elements, and
  channels $inp(n : \nat)$ and $outp(n : \nat)$ represent inputs and outputs.
  $$Buffer ~\defs~ bf := \langle\rangle \relsemi \left(
  \begin{array}{l}
    \ckey{while}~~true~~\ckey{do} \\
    ~~\left(
    \begin{array}{l}
      inp?v \then bf := bf \cat \langle v \rangle \\
      \extchoice (\#bf > 0) \guard out!(head(bf)) \then bf := tail(bf)
    \end{array} 
    \right)
  \end{array} \right)$$
Variable $bf$ is set to the empty sequence $\langle\rangle$, and then a non-terminating loop describes 
the main behaviour. Its body repeatedly allows the environment to either provide a value $v$ over $inp$, followed by 
which $bf$ is extended, or else, if the buffer is non-empty, receive the value at the head, and then $bf$ is contracted. \qed
\end{example}
The semantics of such programs can be captured using reactive contracts~\cite{Foster17c}:
$$\rc{P_1(\trace, \state, r)}{P_2(\trace, \state, r, r')}{P_3(\trace, \state, \state', r, r')}$$
Here, $P_{1\cdots3}$ are reactive relations that respectively encode, (1) the precondition in terms of the initial state
and permissible traces; (2) permissible intermediate interactions with respect to an initial state; and (3) final states
following execution. Pericondition $P_2$ and postcondition $P_3$ are both within the ``guarantee'' part of the
underlying design contract, and so must be strengthened by refinment; see Appendix~\ref{sec:appendix} and
\cite{Foster17c} for details. $P_2$ does not refer to intermediate state variables since they are concealed when a
program is quiescent.

Variable $\trace$ refers to the trace, and $\state, \state' : \Sigma$ to the state, for state space $\Sigma$. Traces are
equipped with operators for the empty trace $\snil$, concatenation $tt_1 \cat tt_2$, prefix $tt_1 \le tt_2$, and
difference $tt_1 - tt_2$, which removes a prefix $tt_2$ from $tt_1$. Technically, $\trace$ is not a relational variable,
but an expression $\trace \defs tr' - tr$ where $tr, tr'$, as usual in UTP, encode the trace
relationally~\cite{Hoare&98}. Nevertheless, due to our previous results~\cite{Foster16a,Foster17b}, $\trace$ can be
treated as a variable. Here, traces are modelled as finite sequences, $\trace : \seq \textit{Event}$, for some event
set, though other models are also admitted~\cite{Foster17b}. Events can be parametric, written $a.x$, where $a$ is a
channel and $x$ is the data. Moreover, the relations can encode additional semantic data, such as refusals, using
variables $r, r'$. Our theory, therefore, provides an extensible denotational semantic model for reactive and concurrent
languages.

To exemplify, we consider the event prefix and assignment operators from \Circus, which require that we add variable
$ref' : \power(\textit{Event})$ to record refusals.
% 
% \begin{definition}[Terminated Event Prefix and Assignment]
\begin{align*}
  a \then \Skip ~~\defs~~~& \rc{\truer}{\trace = \langle\rangle \land a \notin ref'}{\trace = \langle a \rangle \land \state' = \state} \\
  x := v ~~\defs~~~& \rc{\truer}{\false}{\state' = \state \oplus \{x \mapsto v\} \land \trace =\langle\rangle}
\end{align*}
% \end{definition}
%
\noindent Prefix has a true precondition, indicated using the reactive relation $\truer$, since the environment cannot
cause errors. In the pericondition, no events have occurred ($\trace = \langle\rangle$), but $a$ is not being
refused. In the postcondition, the trace is extended by $a$, and the state is unchanged. Assignment also has a true
precondition, but a false pericondition since it terminates without interaction. The postcondition updates the state,
and leaves the trace unchanged.

% To exemplify, consider a simple \Circus action and its definitional contract:
% $$a \then \Skip \extchoice b \then \Chaos ~=~ 
% \rc{\neg (\langle b \rangle \le \trace)}{
%   \begin{array}{c}
%     \trace = \langle \rangle \land \\
%     a \notin ref' \land b \notin ref'
%   \end{array}}{
%   \begin{array}{c}
%     \trace = \langle a \rangle \land \\
%     st' = st
%   \end{array}}$$
% It allows the environment to select event $a$ or $b$, and then respectively either terminates or diverges, 
% which is a form of program abort. Therefore the precondition is $\neg (b \le \trace)$: event $b$ should not happen initially. 
% The postcondition states that, in the solitary final state the event $a$ occurred, and the state is
% unchanged. Finally, the in pericondition the trace has not yet been extended -- since we are awaiting interaction -- and both 
% $a$ and $b$ are being offered, denoted by their absence in the refusal set $ref'$. In order to model this example we had to 
% enrich our contracts with the refusal variable $ref' : \power(\textit{Event})$.

Reactive relations and contracts are characterised by healthiness conditions $\healthy{RR}$ and $\healthy{NSRD}$,
respectively, which we have previously described~\cite{Foster17c}, and reproduce in
Appendix~\ref{sec:appendix}. $\healthy{NSRD}$ specialises the theory of reactive
designs~\cite{Cavalcanti&06,Oliveira&09} to \emph{normal stateful reactive designs}~\cite{Foster17c}. Both
$\theoryset{RR}$ and $\theoryset{NSRD}$ are closed under sequential composition, and have units $\IIr$ and $\IIsrd$,
respectively. Both also form complete lattices under $\refinedby$, with top elements $\false$ and
$\Miracle = \rc{\!\truer\!}{\!\false\!}{\!\false\!}$, respectively. $\Chaos = \rc{\!\false\!}{\!\false\!}{\!\false\!}$,
the least determinisitic contract, is the bottom of the reactive contract lattice. We define the conditional operator
$\conditional{P\!}{\!b\!}{\!Q} \defs ((b\!\land\!P)\!\lor\!(\neg b\!\land\!Q))$, where $b$ is a condition on unprimed
state variables, which can be used for both reactive relations and contracts. We then define the state test operator
$\rasm{b} \defs \conditional{\IIr\!}{\!b\!}{\!\false}$.

% Verification can be facilitated through refinement $\rc{\!P_1\!}{\!P_2\!}{\!P_3\!} \refinedby Q$, where the required
% property is specified as an explicit contract triple, and the program $Q$ is an $\healthy{NSRD}$ relation. Contract
% refinement requires that the precondition is weakened, and the peri- and postcondition are both strengthened. Thus, if
% the contract of $Q = \rc{\!Q_1\!}{\!Q_2\!}{\!Q_3\!}$ can be calculated, then refinement follows by three proof obligations: (1)
% $Q_1 \refinedby P_1$; (2) $P_2 \refinedby (Q_2 \land P_1)$; and (3) $P_3 \refinedby (Q_3 \land P_1)$.

Contracts can be composed using relational calculus. The following identities~\cite{Foster17c,Foster-RDES-UTP} show how
this entails composition of the underlying pre-, peri-, and postconditions for $\bigsqcap$ and $\relsemi$, and also
demonstrates closure under these operators.

\begin{theorem}[Reactive Contract Composition] \label{thm:rc-comp}
\begin{align}
  \textstyle \bigsqcap_{i\in I} \, \rc{\!P(i)\!}{\!Q(i)\!}{\!R(i)\!} &= \textstyle \rc{\bigwedge_{i \in I} P(i)}{\bigvee_{i \in I} Q(i)}{\bigvee_{i \in I} R(i)} \label{thm:rc-choice} \\[.1ex]
%  \conditional{\rc{\!P_1\!}{\!P_2\!}{\!\!P_3\!}\!}{\!b\!}{\!\rc{\!Q_1\!}{\!Q_2\!}{\!\!Q_3\!}} &= \rc{\conditional{P_1\!}{\!b\!}{\!Q_1}}{\conditional{P_2\!}{\!b\!}{\!Q_2}}{\conditional{P_3\!}{\!b\!}{\!Q_3}} \\[.1ex]
  \rc{\!P_1\!}{\!P_2\!}{\!\!P_3\!} \relsemi \rc{\!Q_1\!}{\!Q_2\!}{\!\!Q_3\!}&= \rc{\!P_1\!\land\!(P_3\!\wpR\!Q_1)\!}{P_2\!\lor\!(P_3\!\relsemi\!Q_2)\!}{\!P_3\!\relsemi\!Q_3} \label{thm:rc-seq} % \\[1ex]
%  \rc{P}{Q}{R}\bm{\star} &= \rc{R\bm{\star}\!\wpR\!P}{R\bm{\star} \relsemi Q}{R\bm{\star}} \tag{4}
\end{align}
\end{theorem}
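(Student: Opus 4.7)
The plan is to unfold the contract notation $\rc{P_1}{P_2}{P_3}$ to its underlying reactive design representation (which, as the footnote indicates, has the form $\ckey{R}_s(P_1 \vdash P_2 \diamond P_3)$) and then reduce the two equations to established composition laws for designs combined with the closure and distributivity properties of the reactive healthiness conditions summarised in the appendix. Since $\theoryset{NSRD}$ is a complete lattice under $\refinedby$ and sequential composition is closed in it, it suffices to manipulate the three constituent reactive relations using relational calculus.

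For equation \ref{thm:rc-choice}, I would start from the design-theoretic identity that, under $\bigsqcap$, preconditions are conjoined while guarantees are disjoined --- this is the standard ``strongest requirement, weakest promise'' pattern. Because the pericondition and postcondition together form the guarantee part of the underlying design, they each distribute as suprema over $I$, while $\bigwedge_i P(i)$ appears in the assumption. The main technical point is that the reactive healthiness condition wrapping the design commutes with arbitrary disjunction, which follows from monotonicity and continuity of $\healthy{NSRD}$ (already established in \cite{Foster17c,Foster-RDES-UTP}).

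For equation \ref{thm:rc-seq}, I would expand $\relsemi$ into its semantic definition and split the computation into three cases according to whether the composite observation is still pending in the first contract, is quiescent at the handover to the second, or is terminating in the second. The precondition collects two obligations: $P_1$ must hold initially, and any terminating behaviour of the first contract (described by $P_3$) must satisfy $Q_1$, which is precisely $P_3 \wpR Q_1$ using the reactive weakest-precondition operator. The pericondition arises by case split: either the first contract has not yet terminated (contributing $P_2$), or it has terminated with $P_3$ and the second is now quiescent in $Q_2$ (contributing $P_3 \relsemi Q_2$). The postcondition collects only the terminating-after-terminating case, giving $P_3 \relsemi Q_3$.

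The hard part will be the pericondition of the sequential composition, because it requires a careful argument about handover at the quiescent boundary: one must show that the intermediate observations of the first contract are preserved even after composition (there are no additional trace obligations forced by the second contract, since the interaction history at quiescence is a prefix of what $P_3$ would extend), and conversely that every intermediate observation of the composite factors uniquely as either a first-contract pericondition or a terminating first-contract trace followed by a second-contract pericondition. This factorisation relies on the monotonicity of trace extension and on the fact that $P_2$ ignores intermediate state (as noted after the contract definition), so closure under $\healthy{RR}$ is preserved throughout. Once this case analysis is discharged, closure of $\theoryset{NSRD}$ under $\relsemi$ yields the result by wrapping the derived pre-, peri-, and postconditions back into the contract form.
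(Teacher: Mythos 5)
The paper itself gives no inline proof of this theorem --- it defers to the mechanisation in \cite{Foster17c,Foster-RDES-UTP} --- but your outline matches the route taken there: unfold $\rc{\!P\!}{\!Q\!}{\!R\!}$ into its reactive-design form, apply the standard design composition laws (conjoin assumptions, disjoin commitments), and push the healthiness conditions through by continuity. Two remarks. First, your ``hard part'' for the pericondition of \eqref{thm:rc-seq} is actually the easy part: you do not need a semantic factorisation argument about traces at the quiescent boundary, because quiescence is marked syntactically by $wait'$ and the $\healthy{R3}_h$ condition forces the second design to behave as $\IIsrd$ whenever its predecessor is waiting. The case split you describe therefore falls out mechanically from expanding $Q_2 \wcond Q_3$ and the $wait'$-conditional, with no appeal to prefix-closure or uniqueness of the decomposition (which in any case need not be unique --- the two disjuncts of the pericondition can overlap). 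Second, two hypotheses you leave implicit do real work: the continuity of $\healthy{R}_s$ used for \eqref{thm:rc-choice} only holds for $I \neq \emptyset$ (for $I = \emptyset$ the left side is $\Miracle$, which the right side also yields, but by a separate argument); and obtaining the precondition $P_1 \land (P_3 \wpR Q_1)$ rather than $\negr(\negr P_1 \relsemi \truer) \land \negr(P_3 \relsemi \negr Q_1)$ relies on the normality condition in $\healthy{NSRD}$, which guarantees $P_1$ does not constrain dashed variables. With those caveats made explicit your argument is sound and is essentially the mechanised proof.
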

%
% Conditional distributes through a contract.
Nondeterministic choice requires all preconditions, and asserts that one of the peri- and postcondition pairs hold. For
sequential composition, the precondition assumes that $P_1$ holds, and that $P_3$ fulfils $Q_1$. The latter is
formulated using a reactive weakest precondition ($\!\wpR\!$), which obeys standard laws~\cite{Dijkstra75} such as:
$$\textstyle(\bigvee_{i \in I}\,P(i))\!\wpR\!R = \bigwedge_{i \in I} \, P(i)\!\wpR\!R \qquad (P\!\relsemi\!Q)\!\wpR\!R = P\!\wpR\!(Q\!\wpR\!R)$$
In the pericondition, either the first contract is intermediate ($P_2$), or else it terminated ($P_3$) and then
following this the second is intermediate ($Q_2$). In the postcondition the contracts have both terminated in sequence
($P_3 \relsemi Q_3$).

%With this law, we can prove that $\Miracle$ is a left annihilator for any contract.

With these and related theorems~\cite{Foster16a}, we can calculate contracts of reactive programs. Verification, then,
can be performed by proving a refinement between two reactive contracts, a strategy we have mechanised in the
Isabelle/UTP tactics \textsf{rdes-refine} and \textsf{rdes-eq}~\cite{Foster16a}. The question remains, though, how to
reason about the underlying compositions of reactive relations for the \text{pre-}, peri-, and postconditions. For
example, consider the action $(a\!\then\!\Skip) \relsemi x\!:=\!v$. For its postcondition, we must simplify
$(\trace = \langle a \rangle \land \state' = \state) \relsemi (\state' = \state \oplus \{x \mapsto v\} \land \trace =
\langle\rangle)$.
In order to simplify its precondition, we also need to consider reactive weakest preconditions. Without such
simplifications, reactive relations can grow very quickly and hamper proof. Finally, of particular importance is the
handling of iterative reactive relations. We address these issues in this paper.

% All theorems in this paper have been mechanically verified in Isabelle/UTP~\cite{Foster16a}. We have also implemented
% two tactics, \textsf{rdes-refine} and \textsf{rdes-eq}, that perform automatic calculation of contracts for reactive
% programs, and attempt to discharge the three proof obligations of a refinement or equality conjecture using our
% automated relational calculus tactic~\cite{Foster14}.

\section{Linking UTP and Kleene Algebra}
\label{sec:utp-kleene}
In this section, we characterise properties of a UTP theory sufficient to identify a KA, and use this to obtain theorems for
iterative contracts. We observe that UTP relations form a KA $(Rel, \intchoice, \relsemi, \star, \II)$, where
$P\star \defs (\nu X @ \II \intchoice P \relsemi X)$. We have proved this definition equivalent to the power form:
$P\star = (\Intchoice_{i \in \nat} ~ P^i)$ where $P^n$ iterates sequential composition $n$ times.

Typically, UTP theories, like $\theoryset{NSRD}$, share the operators for choice ($\intchoice$) and composition
($\relsemi$), only redefining them when absolutely necessary. Formally, given a UTP theory defined by a healthiness
condition $\healthy{H}$, the set of healthy relations $\theoryset{H}$ is closed under $\intchoice$ and $\relsemi$. This
has the major advantage that a large body of laws is directly applicable from the relational calculus. The ubiquity of
$\intchoice$, in particular, can be characterised through the subset of continuous UTP theories, where $\healthy{H}$
distributes through arbitrary non-empty infima, that is,
\begin{center}
$\healthy{H}\left(\Intchoice_{i\in I}\,P(i)\right) = \Intchoice_{i\in I}\,\healthy{H}(P(i)) \text{ provided } I \neq \emptyset$.
\end{center}
Monotonicity of $\healthy{H}$ follows from continuity, and so such theories induce a complete lattice. Continuous UTP
theories include designs~\cite{Hoare&98,Guttman2010}, CSP, and \Circus~\cite{Oliveira&09}. A further consequence of
continuity is that the relational weakest fixed-point operator $\mu X @ F(X)$ constructs healthy relations when
$F : Rel \to \theoryset{H}$.

Though these theories share infima and weakest fixed-points, they do not, in general, share $\top$ and $\bot$ elements,
which is why the infima are non-empty in the above continuity property. Rather, we have a top element
$\thtop{H} \defs \healthy{H}(\false)$ and a bottom element $\thbot{H} \defs \healthy{H}(\true)$~\cite{Foster17c}. The
theories also do not share the relational identity $\II$, but typically define a bespoke identity $\IIT{H}$, which means
that $\theoryset{H}$ is not closed under the relational Kleene star. However, $\theoryset{H}$ is closed under the
related Kleene plus $P^{+} \defs P \relsemi P\star$ since it is equivalent to $(\Intchoice_{i \in \nat} ~ P^{i+1})$,
which iterates $P$ one or more times. Thus, we can obtain a theory Kleene star with the definition
$P \bm{\star} \defs \IIT{H} \sqcap P^{+}$, under which $\healthy{H}$ is indeed closed. We, therefore, define the
following criteria for a UTP theory.
\begin{definition}
  A Kleene UTP theory $(\healthy{H}, \IIT{H})$ satisfies the following conditions: (1) $\healthy{H}$ is idempotent and
  continuous; (2) $\healthy{H}$ is closed under sequential composition; (3) identity $\IIT{H}$ is $\healthy{H}$-healthy; (4)
  $\IIT{H} \relsemi P = P \relsemi \IIT{H} = P$, when $P$ is $\healthy{H}$-healthy; (5)
  $\thtop{H} \relsemi P = \thtop{H}$, when $P$ is $\healthy{H}$-healthy.
\end{definition}
From these properties, we can prove the following theorem.
\begin{theorem} \label{thm:kautp} If $(\healthy{H}, \IIT{H})$ is a Kleene UTP theory, then
  $(\theoryset{H}, \intchoice, \thtop{H}, \relsemi, \IIT{H}, \bm{\star})$ forms a weak Kleene algebra.
\end{theorem}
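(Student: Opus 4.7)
The plan is to verify each clause of Definition~\ref{def:wka} in turn, exploiting the power form $P^+ = \Intchoice_{i \in \nat}\, P^{i+1}$ established just above the statement, together with the definition $P\bm{\star} \defs \IIT{H} \intchoice P^+$. Throughout, I shall use the fact that the KA order $x \le y \iff x \intchoice y = y$ coincides in $\theoryset{H}$ with the reversal of refinement, i.e., with $y \refinedby x$. The argument breaks into three stages: the weak dioid structure, closure of $\theoryset{H}$ under $\bm{\star}$, and the three star axioms.

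For the weak dioid, idempotence, commutativity, and associativity of $\intchoice$, associativity of $\relsemi$, and left/right distributivity of $\relsemi$ over $\intchoice$ transfer directly from the Boolean quantale $(\Rel, \sqsupseteq, \false, \relsemi, \II)$, once we note that $\theoryset{H}$ is closed under $\intchoice$ (by continuity) and under $\relsemi$ (condition~2). The element $\thtop{H} = \healthy{H}(\false)$ is the $\refinedby$-greatest element of $\theoryset{H}$, hence a two-sided unit for $\intchoice$ on healthy relations; that $\IIT{H}$ is a two-sided unit for $\relsemi$ on such is condition~4; the left-annihilator property $\thtop{H} \relsemi P = \thtop{H}$ is condition~5. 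For closure under $\bm{\star}$, an easy induction on $i$ using condition~2 shows every $P^i$ is healthy, whence continuity of $\healthy{H}$ (applied to the nonempty index set $\nat$) yields healthiness of $P^+$, and hence of $P\bm{\star} = \IIT{H} \intchoice P^+$ using condition~3.

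The unfold inequality $\IIT{H} \intchoice x \relsemi x\bm{\star} \le x\bm{\star}$ in fact holds with equality: left-distributivity together with $x \relsemi \IIT{H} = x$ give $x \relsemi x\bm{\star} = x \intchoice \Intchoice_{j \ge 1} x^{j+1} = x^+$, so $\IIT{H} \intchoice x \relsemi x\bm{\star} = x\bm{\star}$. The two induction axioms are mirror images; I would dispatch axiom~\ref{thm:starinductr} and leave its dual to symmetry. Translating $z \intchoice x \relsemi y \le y$ into UTP via the universal property of infima, the hypothesis becomes $y \refinedby z$ and $y \refinedby x \relsemi y$, and the goal $x\bm{\star} \relsemi z \le y$ becomes $y \refinedby x\bm{\star} \relsemi z$. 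Right-distributivity rewrites the right-hand side as $z \intchoice \Intchoice_{i \in \nat}(x^{i+1} \relsemi z)$, so by the universal property it suffices to prove $y \refinedby x^n \relsemi z$ for every $n \in \nat$. This goes by induction on $n$: the base is the first conjunct of the hypothesis, and the step uses monotonicity of $\relsemi$ to turn $y \refinedby x^n \relsemi z$ into $x \relsemi y \refinedby x^{n+1} \relsemi z$, then chains with $y \refinedby x \relsemi y$ by transitivity of $\refinedby$.

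The main obstacle is not conceptual but bookkeeping: keeping the orientation of $\refinedby$ against the KA order straight, and checking that each infinitary distribution step survives the restriction from $\Rel$ to $\theoryset{H}$. The latter rests entirely on continuity of $\healthy{H}$ for infinitary joins such as $P^+$, and on condition~2 for finitary composition; precisely the data built into the definition of a Kleene UTP theory.
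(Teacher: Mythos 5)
Your proof is correct and, in substance, does by hand what the paper's one-line proof outsources to the mechanisation: the paper proves Theorem~\ref{thm:kautp} by instantiating the weak Kleene algebra locale of the Isabelle/HOL KA hierarchy, which amounts to discharging exactly the obligations you enumerate --- the weak-dioid laws inherited from the relational Boolean quantale together with the five conditions of a Kleene UTP theory, closure of $\theoryset{H}$ under $\bm{\star}$ via continuity applied to the non-empty join $P^{+} = \Intchoice_{i \in \nat}\, P^{i+1}$, the unfold law, and the two star-induction axioms by induction on powers. Your handling of the order is right: the KA order $x \le y$ is $y \refinedby x$, the hypothesis of the induction axiom splits into $y \refinedby z$ and $y \refinedby x \relsemi y$, and the conclusion reduces pointwise to $y \refinedby x^{n} \relsemi z$ using right-distribution of $\relsemi$ over arbitrary $\bigvee$ from the quantale structure; the dual axiom indeed goes through by the mirrored argument using left-distribution. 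One small slip: you assert that \emph{every} $P^{i}$ is $\healthy{H}$-healthy, but $P^{0} = \II$ is in general not healthy --- this is precisely why the theory needs its own identity $\IIT{H}$ and why $\bm{\star}$ is built from $P^{+}$ rather than the relational star --- so your induction should begin at $P^{1} = P$; since $P^{+}$ only joins the powers $P^{i+1}$, this does not affect the argument. The value of your version is that it makes the mathematical content explicit where the paper merely cites the lifted library laws; the value of the paper's route is that all of Theorem~\ref{thm:kalaws} and the rest of the KA theory then come for free from the existing hierarchy.
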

\begin{proof}
  We prove this in Isabelle/UTP by lifting of laws from the Isabelle/HOL KA
  hierarchy~\cite{Armstrong2015,Gomes2016}. For details see~\cite{Foster-KA-UTP}.
\end{proof}
All the identities of Theorem~\ref{thm:kalaws} hold in a Kleene UTP theory, thus providing reasoning capabilities for
iterative programs. In particular, we can show that
$(\theoryset{NSRD}, \intchoice, \Miracle, \relsemi, \IIsrd, \bm{\star})$ and
$(\theoryset{RR}, \intchoice, \false, \relsemi, \IIr, \bm{\star})$ both form weak KAs. Moreover, we can now
also show how to calculate an iterative contract~\cite{Foster-RDES-UTP}.

\begin{theorem}[Reactive Contract Iteration] \label{thm:rc-iter}
\begin{align*}
    \rc{P}{Q}{R}\bm{\star} &= \rc{R\bm{\star} \wpR P}{R\bm{\star} \relsemi Q}{R\bm{\star}}
\end{align*}
\end{theorem}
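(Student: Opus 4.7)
My strategy is to unfold the theory Kleene star using its power form, compute each power $\rc{P}{Q}{R}^n$ inductively via Theorem~\ref{thm:rc-seq}, and then collapse the resulting infimum with Theorem~\ref{thm:rc-choice} together with the distributive laws for $\wpR$ and $\relsemi$ over $\bigsqcap$.

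Concretely, since $(\theoryset{NSRD}, \intchoice, \Miracle, \relsemi, \IIsrd, \bm{\star})$ is a weak Kleene algebra by Theorem~\ref{thm:kautp}, the argument given earlier in this section for the relational star adapts verbatim to give $\rc{P}{Q}{R}\bm{\star} = \IIsrd \sqcap \bigsqcap_{n \ge 1} \rc{P}{Q}{R}^n$. Then I would prove by induction on $n \ge 1$ the closed form
$$\rc{P}{Q}{R}^n = \rc{\textstyle \bigwedge_{i=0}^{n-1} R^i \wpR P}{\textstyle \bigvee_{i=0}^{n-1} R^i \relsemi Q}{R^n}.$$
The base case $n=1$ is immediate from the unit laws $\IIr \wpR P = P$ and $\IIr \relsemi Q = Q$. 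For the inductive step, I apply Theorem~\ref{thm:rc-seq} to $\rc{P}{Q}{R} \relsemi \rc{P}{Q}{R}^n$ and re-index using $R \wpR (R^i \wpR P) = R^{i+1} \wpR P$, conjunctivity of $\wpR$ in its second argument, and left-distributivity of $\relsemi$ over $\bigsqcap$.

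Noting that $\IIsrd = \rc{\truer}{\false}{\IIr}$, I would then apply Theorem~\ref{thm:rc-choice} to $\IIsrd \sqcap \bigsqcap_{n \ge 1} \rc{P}{Q}{R}^n$. The three components collapse to $\bigwedge_{i \in \nat} R^i \wpR P$, $\bigvee_{i \in \nat} R^i \relsemi Q$, and $\bigvee_{i \in \nat} R^i$ respectively, taking $R^0 = \IIr$. Since $(\healthy{RR}, \IIr)$ is itself a Kleene UTP theory by Theorem~\ref{thm:kautp}, the postcondition is precisely $R\bm{\star}$ in $\theoryset{RR}$, and distributivity of $\wpR$ over infima on its first argument together with distributivity of $\relsemi$ over $\bigsqcap$ rewrites the other two slots as $R\bm{\star} \wpR P$ and $R\bm{\star} \relsemi Q$, giving the claimed identity.

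The main obstacle is the bookkeeping between the two levels of Kleene star: the outer $\bm{\star}$ operates in the contract theory with unit $\IIsrd$, while $R\bm{\star}$ on the right lives in $\theoryset{RR}$ with unit $\IIr$. One has to verify that the zero-th contract power $\IIsrd$ contributes exactly the $i=0$ summand $\IIr$ to the postcondition, while $\truer$ and $\false$ are absorbed as the neutral elements of $\bigwedge$ and $\bigvee$ over reactive relations in the other two slots, so that the assembled infimum genuinely realises the power form of $R\bm{\star}$ in $\theoryset{RR}$. Once this alignment is pinned down, closure under $\healthy{NSRD}$ is automatic from the closure properties underlying Theorems~\ref{thm:rc-comp} and~\ref{thm:kautp}.
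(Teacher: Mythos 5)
Your proposal is correct and stays entirely within the machinery this section sets up: the paper gives no in-text proof of Theorem~\ref{thm:rc-iter}, deferring to the mechanisation, but every ingredient you use --- the power form of $\bm{\star}$, Theorem~\thmeqref{thm:rc-comp}{thm:rc-seq} for computing the powers, Theorem~\thmeqref{thm:rc-comp}{thm:rc-choice} for collapsing the infimum, and the stated distribution laws for $\wpR$ and $\relsemi$ over $\bigvee$ --- is explicitly provided here, and your inductive closed form for $\rc{P}{Q}{R}^n$ checks out.

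Two remarks on where the weight of the argument sits. First, given the paper's emphasis on Definition~\ref{def:wka}, the more algebraic alternative is an antisymmetry proof from the star unfold and induction axioms (each side refines the other), which avoids the index bookkeeping over powers entirely; your power-form route instead leans on continuity of $\healthy{NSRD}$, which the paper does guarantee, so both are viable, and yours has the advantage of producing the closed form for each finite power as a by-product. Second, the steps you rightly single out as the crux --- that $\IIsrd = \rc{\truer}{\false}{\IIr}$, so the zeroth contract power contributes exactly the $\IIr$ summand to the postcondition while $\truer$ and $\false$ are absorbed as units of $\bigwedge$ and $\bigvee$ in the other two slots, and that $\wpR$ is conjunctive in its second argument so the precondition re-indexes under Theorem~\thmeqref{thm:rc-comp}{thm:rc-seq} --- are precisely the side conditions a referee would check. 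The first is a known identity of the $\healthy{NSRD}$ theory, and the second follows from $\relsemi$ distributing over $\bigvee$ together with $\negr$ being involutive on reactive relations, so there is no gap.
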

Note that the outer and inner star are different operators. The precondition states that $R$ must not violate $P$ after
any number of iterations. The pericondition has $R$ iterated followed by $Q$ holding, since the final observation is
intermediate. The postcondition simply iterates $R$. Thus we have the basis for calculating and reasoning about
iterative contracts.

\section{Reactive Relations of Stateful Failures-Divergences}
\label{sec:circus-rc}
Here, we specialise our contract theory to incorporate failure traces, which are used in CSP, \Circus, and related
languages~\cite{Zhan2008}. We define atomic operators to describe the underlying reactive relations, and the associated
equational theory to expand and simplify compositions arising from Theorems~\ref{thm:rc-comp} and \ref{thm:rc-iter}, and
thus support automated reasoning. We consider external choice separately (\S\ref{sec:ext-choice}).

% These operators give rise to a pattern for reactive relations,
% and thus facilitate proof automation. Whilst  show us how to calculate composite contracts, we need further
% laws to calculate the underlying relational compositions to support our proof strategy. We must, therefore, specialise
% our language of reactive relations for \Circus, and prove theorems for composition with respect to $\relsemi$ and other
% operators. 

Healthiness condition $\healthy{NCSP} \defs \healthy{NSRD} \circ \healthy{CSP3} \circ \healthy{CSP4}$ characterises the
stateful failures-divergences model~\cite{Hoare&98,Cavalcanti&06,Oliveira&09}. $\healthy{CSP3}$ and $\healthy{CSP4}$
ensure the refusal sets are well-formed~\cite{Cavalcanti&06,Hoare&98}: $ref'$ can only be mentioned in the pericondition
(see also Appendix~\ref{sec:appendix}). \healthy{NCSP}, like \healthy{NSRD}, is continuous and has $\Skip$, defined
below, as a left and right unit. Thus, $(\theoryset{NCSP}, \intchoice, \Miracle, \relsemi, \Skip, \bm{\star})$ forms a
Kleene algebra. An \healthy{NCSP} contract has the following specialised form~\cite{Foster-SFRD-UTP}.
$$\rc{P(\trace, \state)}{Q(\trace, \state, ref')}{R(\trace, \state, \state')}$$
The underlying reactive relations capture a portion of the stateful failures-divergences. $P$ captures the initial
states and traces that do not induce divergence, that is, unpredictable behaviour like $\Chaos$. $Q$ captures the
stateful failures of a program: the set of events not being refused ($ref'$) having performed trace $\trace$, starting
in state $\state$. $R$ captures the terminated behaviours, where a final state is observed but no refusals. We describe
the pattern of the underlying reactive relations using the following constructs.

\begin{definition}[Reactive Relational Operators]
\begin{align}
  \cspin{b(\state)}{t(\state)} \defs~~ & \healthy{RR}(b(\state) \land t(\state) \le \trace) \\[.2ex]
  \cspen{b(\state)}{t(\state)}{E(\state)} \defs~~ & \healthy{RR}(b(\state) \land \trace = t(\state) \land (\forall e\!\in\!E(\state) @ e \notin ref')) \\[.2ex]
  \csppf{b(\state)}{\sigma}{t(\state)} \defs~~ & \healthy{RR}(b(\state) \land \state' = \sigma(\state) \land \trace = t(\state))
\end{align}
\end{definition}
\noindent In this definition, we utilise expressions $b$, $t$, and $E$ that refer only to the variables by which they
are parametrised. Namely, $b(\state)\!:\!\Bool$ is a condition on $\state$, $t(\state)\!:\!\seq \textit{Event}$ is a
trace expression that describes a possible event sequence in terms of $\state$, and $E(\state) : \power \textit{Event}$
is an expression that describes a set of events. Following \cite{Back1998}, we describe state updates with substitutions
$\sigma : \Sigma \to \Sigma$. We use $\llparenthesis x \mapsto v \rrparenthesis$ to denote a substitution, which is the
identity for every variable, except that $v$ is assigned to $x$. Substitutions can also be applied to contracts and
relations using operator $\substapp{\sigma}{P}$, and then
$Q[v/x] \defs \substapp{\llparenthesis x \mapsto v \rrparenthesis}{Q}$. This operator obeys similar laws to syntactic
substitution, though it is a semantic operator~\cite{Foster16a}.

$\cspin{b(\state)}{t(\state)}$ is a specification of initial behaviour used in preconditions. It states that initially
the state satisfies condition $b$, and $t$ is a prefix of the overall trace. $\cspen{b(\state)}{t(\state)}{E(\state)}$
is used in periconditions to specify quiescent observations, and corresponds to a failure trace. It specifies that the
state variables initially satisfy $b$, the interaction described by $t$ has occurred, and finally we reach a quiescent
phase where none of the events in $E$ are being refused. $\csppf{b(\state)}{\sigma}{t(\state)}$ is used to encode final
terminated observations in the postcondition. It specifies that the initial state satisfies $b$, the state update
$\sigma$ is applied, and the interaction $t$ has occurred.

These operators are all deterministic, in the sense that they describe a single interaction and state-update
history. There is no need for explicit nondeterminism here, as this is achieved using $\bigvee$. These operators allow
us to concisely specify the basic operators of our theory as given below.

\begin{definition}[Basic Reactive Operators] \label{thm:bcircus-def}
\begin{align}
  \assignsC{\sigma} ~\defs~& \rc{\truer}{\false}{\csppf{\ptrue}{\sigma}{\langle\rangle}} \label{def:gen-asn} \\[.1ex]
%  \Skip             ~\defs~& \rc{\truer}{\false}{\csppf{\ptrue}{id}{\langle\rangle}} \\[.3ex]
  \ckey{Do}(a)      ~\defs~& \rc{\truer}{\cspen{\ptrue}{\langle\rangle}{\{a\}}}{\csppf{\ptrue}{id}{\langle a \rangle}} \\[.1ex]
  \Stop             ~\defs~& \rc{\truer}{\cspen{\ptrue}{\langle\rangle}{\emptyset}}{\false} %\\[.3ex]
  % \begin{array}{c}
  %   \rc{P_1}{P_2}{P_3} \\ 
  %   \extchoice \\
  %   \rc{Q_1}{Q_2}{Q_3}
  % \end{array} =& \rc{P_1 \land Q_1}{
  %                  \begin{array}{l}
  %                    (P_2 \land Q_2) \\
  %                    \infixIf \trace = \langle\rangle \infixElse \\
  %                    P_2 \lor Q_2
  %                  \end{array}}{P_3 \lor Q_3} \\[.5ex]
  % g \guard \rc{P}{Q}{R} = & \rc{\cspin{g}{\langle\rangle} \implies P}{\rasm{g} \relsemi Q \lor \cspen{\neg g}{\langle\rangle}{\emptyset}}{\rasm{g} \relsemi R} \\[2ex]
\end{align}
\end{definition}

% $\assignsr{\sigma}$ is a generalised assignment
% operator, inspired by Back's work, which lifts a substitution function on state, $\sigma : \Sigma \to \Sigma$, to a
% reactive relation, and is defined to be $(\trace = \langle\rangle \land st' = \sigma(st))$. A traditional assignment
% $x := v$ can then be expressed with the substitution $(\lambda s @ s(x \mapsto v))$ which applies an update to incoming
% state $s$.
% These are not definitions, but theorems of the original \Circus definitions~\cite{Oliveira&09,Foster17c}.

\noindent Generalised assignment $\assignsC{\sigma}$ is again inspired by \cite{Back1998}. It has a $\truer$
precondition and a $\false$ pericondition: it has no intermediate observations. The postcondition states that for any
initial state ($true$), the state is updated using $\sigma$, and no events are produced ($\langle\rangle$). A singleton
assignment $x := v$ can be expressed using a state update $\llparenthesis x \mapsto v \rrparenthesis$. We define
$\Skip \defs \assignsC{id}$, which leaves all variables unchanged.

$\ckey{Do}(a)$ encodes an event action. Its pericondition states that no event has occurred, and $a$ is accepted. Its
postcondition extends the trace by $a$, leaving the state unchanged. We can denote \Circus event prefix $a \then P$ as
$\ckey{Do}(a) \relsemi P$. 

Finally, $\Stop$ represents a deadlock: its pericondition states the trace is unchanged and no events are being
accepted. The postcondition is false as there is no way to terminate. A \Circus guard $g \guard P$ can be denoted as
$(\conditional{P}{g}{\Stop})$, which behaves as $P$ when $g$ is true, and otherwise deadlocks.
% \noindent $\Skip$ has a $\truer$ precondition and a $\false$ pericondition, as it has no intermediate observations. The
% postcondition uses the $\Phi$ operator to state that for every initial state ($true$) the state remains unchanged
% ($id$), and no events are produced ($\langle\rangle$). 

To calculate contractual semantics, we need laws to reduce pre-, peri-, and postconditions. These need to cater for
various composition cases of operators $\bigsqcap$, $\relsemi$, and $\Extchoice$. So, we prove~\cite{Foster-SFRD-UTP}
the following composition laws for $\mathcal{E}$ and $\Phi$.

\begin{theorem}[Reactive Relational Compositions] \label{thm:crel-comp}
  \begin{align}
    \rasm{b} \relsemi P =~~& \csppf{b}{id}{\langle\rangle} \relsemi P \label{thm:crc1} \\[.2ex]
    \csppf{b_1}{\sigma_1}{t_1} \relsemi \csppf{b_2}{\sigma_2}{t_2} =~~& \csppf{b_1 \land \substapp{\sigma_1}{b_2}}{\sigma_2 \circ \sigma_1}{t_1 \cat \substapp{\sigma_1}{t_2}} \label{thm:crc2} \\[.2ex]
    \csppf{b_1}{\sigma_1}{t_1} \relsemi \cspen{b_2}{t_2}{E} =~~& \cspen{b_1 \land \substapp{\sigma_1}{b_2}}{t_1 \cat \substapp{\sigma_1}{t_2}}{\substapp{\sigma_1}{E}} \label{thm:crc3} \\[.2ex]
    \conditional{\csppf{b_1}{\sigma_1}{t_1}}{c}{\csppf{b_2}{\sigma_2}{t_2}} =~~& \csppf{\conditional{b_1\!}{\!c\!}{\!b_2}}{\conditional{\sigma_1\!}{\!c\!}{\!\sigma_2}}{\conditional{t_1\!}{\!c\!}{\!t_2}} \label{thm:crc4} \\[.2ex]
    \conditional{\cspen{b_1}{t_1}{E_1}}{c}{\cspen{b_2}{t_2}{E_2}} =~~& \cspen{\conditional{b_1\!}{\!c\!}{\!b_2}}{\conditional{t_1\!}{\!c\!}{\!t_2}}{\conditional{E_1\!}{\!c\!}{\!E_2}} \label{thm:crc5} \\[.2ex]
    \textstyle\left(\bigwedge_{i \in I}\,\cspen{b(i)}{t}{E(i)}\right) =~~& \textstyle\cspen{\bigwedge_{i \in I}\,b(i)}{t}{\bigcup_{i \in I}\,E(i)} \label{thm:crc6}
  \end{align}
\end{theorem}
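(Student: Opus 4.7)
The plan is to reduce each identity to a calculation in the alphabetised relational calculus by unfolding the defined operators $\Phi$, $\mathcal{E}$, and $\rasm{\cdot}$, and then using (i) the characterisation of $\relsemi$ as existential quantification over the intermediate observation, (ii) closure of $\theoryset{RR}$ under $\relsemi$ so the $\healthy{RR}$ wrapper can be peeled off on the left-hand sides and reinstated on the right, and (iii) elementary trace algebra ($tr \le tr_0 \le tr'$ implies $tr' - tr = (tr_0 - tr) \cat (tr' - tr_0)$).

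For (\ref{thm:crc1}), I would first observe that $\rasm{b} = b \land \IIr$ by definition of the conditional, and then recognise that $\IIr$ forces $\trace = \langle\rangle$ and $\state' = \state$, which is precisely $\csppf{\ptrue}{id}{\langle\rangle}$. Hence $\rasm{b} = \csppf{b}{id}{\langle\rangle}$, and the composition identity follows by congruence. For (\ref{thm:crc2}) and (\ref{thm:crc3}), the main calculations, I would unfold both operands and, after expanding $\relsemi$ as an existential over the intermediate state and trace, note that the first operand pins down $\state_0 = \sigma_1(\state)$ and fixes the intermediate trace contribution as $tr_0 - tr = t_1$. One-point elimination of these existentials turns every occurrence of $\state$ in the second operand into $\sigma_1(\state)$, which is exactly the semantic substitution $\substapp{\sigma_1}{\cdot}$; combining the trace contributions via trace algebra yields $t_1 \cat \substapp{\sigma_1}{t_2}$, merging the state updates gives $\sigma_2 \circ \sigma_1$ in (\ref{thm:crc2}), and in (\ref{thm:crc3}) leaves only the refusal constraint $\substapp{\sigma_1}{E}$ (which is independent of $\state'$). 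Re-wrapping with the idempotent $\healthy{RR}$ produces the right-hand side.

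Laws (\ref{thm:crc4}) and (\ref{thm:crc5}) are distributivity lemmas for $\conditional{\cdot}{c}{\cdot}$ through $\Phi$ and $\mathcal{E}$: because $c$ mentions only unprimed state variables, it commutes with the $\healthy{RR}$ wrapper and can be pushed inside every conjunct, after which the identity follows by pointwise case analysis on $c$. Law (\ref{thm:crc6}) is a conjunction of $\mathcal{E}$-predicates sharing the same trace expression $t$; the shared equality $\trace = t$ makes the conjunction of trace constraints idempotent, the $b(i)$'s combine as a conjunction, and the universally quantified refusal constraints combine via the standard identity $\bigwedge_{i \in I} (\forall e\!\in\!E(i) @ e \notin ref') \iff (\forall e\!\in\!\bigcup_{i \in I} E(i) @ e \notin ref')$.

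The principal obstacle is the trace-algebra step in (\ref{thm:crc2}) and (\ref{thm:crc3}): one must carefully maintain the invariants $tr \le tr_0 \le tr'$ imposed by $\healthy{RR}$ when decomposing and recomposing $\trace$, and correctly account for the interaction between the semantic substitution $\substapp{\sigma_1}{\cdot}$ and trace/refusal expressions that mention $\state$. Once these bookkeeping issues are handled uniformly, the remaining work is routine distributivity, well-suited to the automated reasoning available in Isabelle/UTP.
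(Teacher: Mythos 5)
Your overall strategy --- unfold $\Phi$, $\mathcal{E}$, and $\rasm{\cdot}$, expand $\relsemi$ as an existential over the intermediate observation, apply one-point elimination and trace algebra, and re-close under the idempotent $\healthy{RR}$ --- is exactly the calculation behind the paper's (mechanised) proof, and your treatment of laws (\ref{thm:crc2})--(\ref{thm:crc6}) is sound.

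However, your argument for law (\ref{thm:crc1}) contains a step that fails. You claim $\rasm{b} = \csppf{b}{id}{\langle\rangle}$ and then obtain the composed identity by congruence. That equality does not hold: $\rasm{b} = b \land \IIr$, and $\IIr \defs (tr' = tr \land \state' = \state \land r' = r)$ also fixes the \emph{additional} observational variables ($r' = r$; in the stateful-failures instance, $ref' = ref$), whereas $\csppf{b}{id}{\langle\rangle} = \healthy{RR}(b \land \state' = \state \land \trace = \langle\rangle)$ leaves $ref'$ entirely unconstrained. The two atoms are therefore different relations, and this is precisely why the law is stated as an equality of compositions $\rasm{b} \relsemi P = \csppf{b}{id}{\langle\rangle} \relsemi P$ rather than of the atoms themselves. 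The correct argument must use the context: when $P$ is a reactive relation of the stateful-failures theory it does not depend on the unprimed refusal variable (a consequence of $\healthy{CSP3}$, reflected in the relational forms $P(\trace,\state)$, $Q(\trace,\state,ref')$, $R(\trace,\state,\state')$), so the intermediate refusal value is existentially quantified away on both sides of the composition and the discrepancy vanishes. Without some such side condition on $P$ the law is false, so the congruence shortcut cannot stand as written. A minor further caveat: your justification of (\ref{thm:crc6}) implicitly needs $I \neq \emptyset$, since the empty conjunction yields $\truer$ on the left but $\cspen{true}{t}{\emptyset}$ on the right; this assumption is harmless in its intended application to external choice but should be made explicit.
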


\noindent Law \eqref{thm:crc1} states that a precomposed test can be expressed using $\Phi$. \eqref{thm:crc2} states
that the composition of two terminated observations results in the conjunction of the state conditions, composition of
the state updates, and concatenation of the traces. It is necessary to apply the initial state update $\sigma_1$ as a
substitution to both the second state condition ($s_2$) and the trace expression ($t_2$). \eqref{thm:crc3} is similar,
but accounts for the enabled events rather than state updates. \eqref{thm:crc2} and \eqref{thm:crc3} are required
because of Theorem~\thmeqref{thm:rc-comp}{thm:rc-seq}, which sequentially composes a pericondition with a postcondition,
and a postcondition with a postcondition. \eqref{thm:crc4} and \eqref{thm:crc5} show how conditional distributes
through the operators. Finally, \eqref{thm:crc6} shows that a conjunction of intermediate observations with a common
trace takes the conjunction of the state conditions, and the union of the enabled events. It is needed for external
choice, which conjoins the periconditions (see \S\ref{sec:ext-choice}).

In order to calculate preconditions, we need to consider the weakest precondition
operator. Theorem~\ref{thm:rc-comp}-\ref{thm:rc-seq} requires that, in a sequential composition $P \relsemi Q$, we need
to show that the postcondition of contract $P$ satisfies the precondition of contract $Q$. Theorem~\ref{thm:crel-comp}
explains how to eliminate most composition operators in a contract's postcondition, but not in general
$\lor$. Postconditions are, therefore, typically expressed as disjunctions of the $\Phi$ operator, and so it suffices to
calculate its weakest precondition using the theorem below.

\begin{theorem} \, $\csppf{s}{\sigma}{t} \wpR P ~=~ (\cspin{s}{t} \implies (\substapp{\sigma}{P})[\trace - t/\trace])$ \label{thm:evwp}
\end{theorem}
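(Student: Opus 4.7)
The plan is to reduce $\wpR$ to a sequential composition and then exploit the fully deterministic structure of $\Phi$. In this framework the reactive weakest precondition $P\!\wpR\!R$ is characterised as $\neg_r (P \relsemi \neg_r R)$, where $\neg_r$ is the $\healthy{RR}$-closed form of negation. So the central task is to compute $\csppf{s}{\sigma}{t} \relsemi R$ for an arbitrary $\healthy{RR}$-healthy $R$, and then specialise to $R := \neg_r P$.

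First, I would unfold $\csppf{s}{\sigma}{t} = \healthy{RR}(s \land \state' = \sigma(\state) \land \trace = t)$ and observe that in a composition with $R$, the intermediate state is pinned to $\sigma(\state)$ and the intermediate trace contribution is pinned to $t$. Using the fact that $\trace$ behaves as a variable, with the total trace of a composition splitting as $\trace_\Phi \cat \trace_R$ (see \cite{Foster16a,Foster17b}), the $R$-part observes the residual trace $\trace - t$ provided $t \le \trace$, and the composition is impossible otherwise. This yields the auxiliary identity
\[
\csppf{s}{\sigma}{t} \relsemi R \;=\; \healthy{RR}\bigl(s \land t \le \trace \land (\substapp{\sigma}{R})[\trace - t/\trace]\bigr),
\]
whose shape is closely analogous to Theorem~\ref{thm:crel-comp}-\eqref{thm:crc2} and \eqref{thm:crc3} and which can be established in the same style.

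Second, I would instantiate $R := \neg_r P$ and pull $\neg_r$ to the outside, which distributes classically over the conjunction modulo $\healthy{RR}$ to give $\neg s \lor \neg(t \le \trace) \lor (\substapp{\sigma}{P})[\trace - t/\trace]$. The first two disjuncts together are exactly $\neg \cspin{s}{t}$ by the definition of $\mathcal{I}$, so the whole expression rewrites to the stated implication $\cspin{s}{t} \implies (\substapp{\sigma}{P})[\trace - t/\trace]$.

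The main obstacle is ensuring that the trace-difference substitution $[\trace - t/\trace]$ is handled correctly under $\healthy{RR}$, and that it commutes as required with both $\substapp{\sigma}{-}$ and reactive negation. Specifically, one must confirm that when $\neg(t \le \trace)$ holds the composition is genuinely empty, so that the wp holds vacuously and lines up with the antecedent of the implication; and that $\substapp{\sigma}{-}$, acting only on the initial-state portion of $P$, passes cleanly through the trace-offset substitution without interfering with occurrences of $\trace$. Both steps are routine in Isabelle/UTP given the lifted substitution and trace lemmas of \cite{Foster16a,Foster17b}, but by hand they demand care about which variables of $P$ are touched by each substitution.
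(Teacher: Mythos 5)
Your proposal is correct and follows essentially the route the paper takes: the paper gives no in-text proof (deferring to the Isabelle/UTP mechanisation), but its definition $P \wpR Q \defs \negr(P \relsemi \negr Q)$ together with the informal gloss following the theorem makes clear that the intended argument is exactly yours --- compute $\csppf{s}{\sigma}{t} \relsemi R$ by trace-splitting and state-pinning (generalising Theorem~\ref{thm:crel-comp}-\eqref{thm:crc2}/\eqref{thm:crc3}), instantiate $R := \negr P$, and push the reactive complement through to recover $\cspin{s}{t}$ as the antecedent. Your closing caveats about the order of $\substapp{\sigma}{-}$ versus $[\trace - t/\trace]$ and the vacuous case $\neg(t \le \trace)$ are precisely the points where care is needed, and you resolve them correctly.
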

In order for $\csppf{s}{\sigma}{t}$ to satisfy reactive condition $P$, whenever we start in the state satisfying $s$ and
the trace $t$ has been performed, $P$ must hold on the remainder of the trace ($\trace - t$), and with the state update
$\sigma$ applied. We can now use these laws, along with Theorem \ref{thm:rc-comp}, to calculate the semantics of
processes, and to prove equality and refinement conjectures, as we illustrate below.

\begin{example} We show that
  $(x\!:=\!1 \relsemi \ckey{Do}(a.x) \relsemi x := x + 2) ~=~ (\ckey{Do}(a.1) \relsemi x\!:=\!3)$. By applying
  Definition \ref{thm:bcircus-def} and Theorems~\ref{thm:rc-comp} (\ref{thm:rc-seq}), \ref{thm:crel-comp},
  \ref{thm:evwp}, both sides reduce to
  $\,\rc{\truer}{\cspen{true}{\langle\rangle}{\{a.1\}}}{\csppf{true}{\{x \mapsto 3\}}{\langle a.1 \rangle}}$, which has
  a single quiescent state, waiting for event $a.1$, and a single final state, where $a.1$ has occurred and state
  variable $x$ has been updated to $3$. We calculate the left-hand side below.
  \begin{align*}
    &(x\!:=\!1 \relsemi \ckey{Do}(a.x) \relsemi x := x + 2) \\[.5ex]
    =& \left(\begin{array}{l}
         \rc{\truer}{\false}{\csppf{\ptrue}{\substmap{x \mapsto 1}}{\langle\rangle}} \relsemi \\[.2ex]
         \rc{\truer}{\cspen{\ptrue}{\langle\rangle}{\{a.x\}}}{\csppf{\ptrue}{id}{\langle a.x \rangle}} \relsemi \\[.2ex]
         \rc{\truer}{\false}{\csppf{\ptrue}{\substmap{x \mapsto x + 1}}{\langle\rangle}} 
       \end{array}\right) & [\text{Def.}~\ref{thm:bcircus-def}] \\[.5ex]
    =& \rc{\truer}{
         \begin{array}{l}
           \csppf{\ptrue}{\substmap{x \mapsto 1}}{\langle\rangle} \relsemi \\ 
           \cspen{\ptrue}{\langle\rangle}{\{a.x\}}
         \end{array}
         }{
         \begin{array}{l}
           \csppf{\ptrue}{\substmap{x \mapsto 1}}{\langle\rangle} \relsemi \\ 
           \csppf{\ptrue}{id}{\langle a.x \rangle} \relsemi \\
           \csppf{\ptrue}{\substmap{x \mapsto x + 2}}{\langle\rangle}
         \end{array}} & \left[
                        \begin{array}{l}
                          \text{Thm.}~\ref{thm:rc-comp}, \\
                          \text{Thm.}~\ref{thm:evwp}
                        \end{array} \right] \\[.5ex]
    =& \rc{\!\truer\!}{\!
         \begin{array}{l}
           \cspen{\ptrue}{\langle\rangle[1/x]}{\{a.x\}[1/x]}
         \end{array}\!
         }{\!
         \begin{array}{l}
           \csppf{\ptrue}{\substmap{x \mapsto 1}}{\langle a.1 \rangle} \relsemi \\
           \csppf{\ptrue}{\substmap{x \mapsto x\!+\!2}}{\langle\rangle}
         \end{array}\!} & \left[
                        \begin{array}{l}
                          \text{Thm.}~\ref{thm:crel-comp}
                        \end{array} \right] \\[.5ex]
    =& \,\rc{\truer}{\cspen{true}{\langle\rangle}{\{a.1\}}}{\csppf{true}{\{x \mapsto 3\}}{\langle a.1 \rangle}} & \square
  \end{align*}
  
\end{example}
\noindent Similarly, we can use our theorems, with the help of our mechanised proof strategy in \textsf{rdes-eq}, to
prove a number of general laws~\cite{Foster-SFRD-UTP}.

\begin{theorem}[Stateful Failures-Divergences Laws] \label{thm:sfdl}

  \vspace{-1ex}
  \noindent
  \begin{minipage}{0.65\textwidth}
  \begin{align}
  \assignsC{\sigma}\relsemi\rc{\!P_1\!}{\!P_2\!}{\!P_3\!} &= \rc{\!\substapp{\sigma\!}{\!P_1}\!}{\!\substapp{\sigma\!}{\!P_2}\!}{\!\substapp{\sigma\!}{\!P_3}\!} \label{thm:asgdist}  \\
  \assignsC{\sigma} \relsemi \ckey{Do}(e) &= \ckey{Do}(\substapp{\sigma}{e}) \relsemi \assignsC{\sigma} \label{thm:asgev}
%  x := v \relsemi e \then P &= e[v/x] \then x\!:=\!v \relsemi P \\
%  a \then (P \intchoice Q) &= (a \then P) \intchoice (a \then Q)
  \end{align}
  \end{minipage}
  \begin{minipage}{0.35\textwidth}
  \begin{align}
  \assignsC{\sigma} \relsemi \assignsC{\rho} &= \assignsC{\rho \circ \sigma} \label{thm:asgcomp} \\ 
  \Stop \relsemi P &= \Stop \label{thm:dlockzero}
  \end{align}
  \end{minipage}
\end{theorem}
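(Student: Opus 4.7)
The overall strategy for each of the four identities is the same: expand the constants on both sides via Definition~\ref{thm:bcircus-def}, push the sequential composition through the contract triples using Theorem~\thmeqref{thm:rc-comp}{thm:rc-seq}, and then reduce each of the three resulting components using the reactive-relational algebra of Theorem~\ref{thm:crel-comp} together with the weakest-precondition law of Theorem~\ref{thm:evwp}.

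Three of the four identities fall out by direct calculation. For \eqref{thm:asgcomp}, the postcondition collapses from $\csppf{\ptrue}{\sigma}{\langle\rangle} \relsemi \csppf{\ptrue}{\rho}{\langle\rangle}$ to $\csppf{\ptrue}{\rho \circ \sigma}{\langle\rangle}$ by a single application of \eqref{thm:crc2}; the pericondition is $\false \lor (\csppf{\ptrue}{\sigma}{\langle\rangle} \relsemi \false) = \false$, since $\false$ is a right annihilator of $\relsemi$; and the precondition reduces to $\truer$ via Theorem~\ref{thm:evwp}. For \eqref{thm:asgev}, symmetric calculation of both sides, combined with \eqref{thm:crc2} and \eqref{thm:crc3}, yields in each case a triple whose pericondition is $\cspen{\ptrue}{\langle\rangle}{\{\substapp{\sigma}{e}\}}$ and whose postcondition is $\csppf{\ptrue}{\sigma}{\langle \substapp{\sigma}{e} \rangle}$, with the substitution of $\sigma$ into the event $e$ arising from the $\substapp{\sigma_1}{t_2}$ and $\substapp{\sigma_1}{E}$ terms of those two laws. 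For \eqref{thm:dlockzero}, assuming $P$ is $\healthy{NCSP}$-healthy, the $\false$ postcondition of $\Stop$ annihilates everything downstream via the left-zero of $\relsemi$, while the standard identity $\false \wpR Q_1 = \truer$ keeps the precondition trivial; the resulting triple is $\Stop$ itself.

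The main obstacle is \eqref{thm:asgdist}, which is fully general in $P_1$, $P_2$, $P_3$. Here Theorem~\thmeqref{thm:rc-comp}{thm:rc-seq} produces the obligations $\csppf{\ptrue}{\sigma}{\langle\rangle} \wpR P_1 = \substapp{\sigma}{P_1}$ and $\csppf{\ptrue}{\sigma}{\langle\rangle} \relsemi P_i = \substapp{\sigma}{P_i}$ for $i \in \{2,3\}$. The first is the specialisation $s := \ptrue,\, t := \langle\rangle$ of Theorem~\ref{thm:evwp}, because $\cspin{\ptrue}{\langle\rangle}$ reduces to $\truer$ and $\trace - \langle\rangle = \trace$. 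The second is a semantic substitution--composition lemma, generalising the pattern of \eqref{thm:crc2} and \eqref{thm:crc3} from $\Phi$ and $\mathcal{E}$ atoms to arbitrary $\healthy{RR}$-healthy relations; it should be established once, by unfolding $\healthy{RR}$-healthiness and showing that left-composition with the identity-trace, non-refusing, state-updating atom $\csppf{\ptrue}{\sigma}{\langle\rangle}$ collapses to applying $\sigma$ as a semantic substitution on the unprimed state variables of $P$, leaving $\trace$ and $ref'$ untouched. Once this lemma is in hand, assembling the three components yields the triple on the right of \eqref{thm:asgdist} directly.
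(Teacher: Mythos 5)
Your proposal is correct and follows essentially the same route as the paper: the laws are established by calculation from Theorems~\ref{thm:rc-comp}, \ref{thm:crel-comp} and \ref{thm:evwp} (automated via the \textsf{rdes-eq} strategy), and your identification of the auxiliary lemma $\csppf{\ptrue}{\sigma}{\langle\rangle} \relsemi P = \substapp{\sigma}{P}$ as the real content of \eqref{thm:asgdist} is precisely the substitution--composition fact the mechanisation relies on. The only organisational difference is that the paper obtains \eqref{thm:asgev} and \eqref{thm:asgcomp} as consequences of \eqref{thm:asgdist}, whereas you calculate them directly from Theorem~\thmeqref{thm:rc-comp}{thm:rc-seq} and \eqref{thm:crc2}--\eqref{thm:crc3}; both are sound.
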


\noindent Law~\eqref{thm:asgdist} shows how assignment distributes substitutions through a contract. \eqref{thm:asgev}
and \eqref{thm:asgcomp} are consequences of~\eqref{thm:asgdist}. \eqref{thm:dlockzero} shows that deadlock is a left
annihilator. 

% In the next section we extend our proof approach to external choice.

% \begin{proof} By application of \textsf{rdes-eq}. \qed \end{proof}

\section{External Choice and Productivity}
\label{sec:ext-choice}
In this section we consider reasoning about programs with external choice, and characterise the important subclass of
productive contracts~\cite{Foster17c}, which are also essential in verifying recursive and iterative reactive programs.

An external choice $P \extchoice Q$ resolves whenever either $P$ or $Q$ engages in an event or terminates. Thus, its
semantics requires that we filter observations with a non-empty trace. We introduce healthiness condition
$\healthy{R4}(P) \defs (P \land \trace > \langle\rangle)$, whose fixed points strictly increase the trace, and its dual
$\healthy{R5}(P) \defs (P \land \trace = \langle\rangle)$ where the trace is unchanged. We use these to define indexed
external choice.
\begin{definition}[Indexed External Choice] \label{def:ext-choice}
\begin{align*}
& \Extchoice i \in I @ \rc{P_1(i)}{P_2(i)}{P_3(i)} \defs \\[.5ex]
& \qquad \textstyle \rc{\bigwedge_{i \in I} ~ P_1(i)}{\left(\bigwedge_{i \in I} ~ \healthy{R5}(P_2(i))\right) \lor \left(\bigvee_{i \in I} ~ \healthy{R4}(P_2(i))\right)}{\bigvee_{i \in I}{P_3(i)}}
\end{align*}
\end{definition}
This enhances the binary definition~\cite{Hoare&98,Oliveira&09}, and recasts our definition in~\cite{Foster17c} for
calculation. Like nondeterministic choice, the precondition requires that all constituent preconditions are
satisfied. In the pericondition \healthy{R4} and \healthy{R5} filter all observations. We take the conjunction of all
\ckey{R5} behaviours: no event has occurred, and all branches are offering to communicate. We take the disjunction of
all \ckey{R4} behaviours: an event occurred, and the choice is resolved. In the postcondition the choice is resolved,
either by communication or termination, and so we take the disjunction of all constituent postconditions. Since
unbounded choice is supported, we can denote indexed input prefix for any size of input domain $A$:
$$a?x\!:\!A \then P(x) ~~~\defs~~~ \Extchoice x \in A @ a.x \then P(x)$$

\noindent We next show how \healthy{R4} and \healthy{R5} filter the various reactive relational operators, which can be applied to
reason about contracts involving external choice.
\begin{theorem}[Trace Filtering] \label{thm:filtering}

\vspace{1ex}
\begin{tabular}{cc}
$\begin{aligned}
  \textstyle\healthy{R4}\left(\bigvee_{i \in I} P(i)\right) & = \textstyle\bigvee_{i \in I} \healthy{R4}(P(i)) \\[1ex]
  \healthy{R4}(\csppf{s}{\sigma}{\langle \rangle}) & = \false \\[1ex]
  \healthy{R4}(\csppf{s}{\sigma}{\langle a, ... \rangle}) & = \csppf{s}{\sigma}{\langle a, ... \rangle}
\end{aligned}
\quad
\begin{aligned}
  \textstyle\healthy{R5}\left(\bigvee_{i \in I} P(i)\right) & = \textstyle\bigvee_{i \in I} \healthy{R5}(P(i)) \\[1ex]
  \healthy{R5}(\cspen{s}{\langle \rangle}{E}) & = \cspen{s}{\langle \rangle}{E} \\[1ex]
  \healthy{R5}(\cspen{s}{\langle a, ... \rangle}{E}) & = \false
\end{aligned}$
\end{tabular}
\end{theorem}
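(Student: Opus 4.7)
The plan is to establish each of the six equations by unfolding the definitions of $\healthy{R4}$, $\healthy{R5}$, $\Phi$, and $\mathcal{E}$, and then performing essentially propositional reasoning about the trace. Since $\healthy{R4}(P) = P \land \trace > \langle\rangle$ and $\healthy{R5}(P) = P \land \trace = \langle\rangle$, both filters are conjunctions with a fixed predicate on $\trace$ alone, which keeps the manipulations elementary.

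For the two distributivity laws over $\bigvee$, conjunction distributes over arbitrary disjunction in the relational calculus, so these laws follow immediately, modulo checking $\healthy{RR}$-closure of the result. The latter holds because each $P(i)$ is $\healthy{RR}$-healthy, disjunction preserves $\healthy{RR}$, and the extra conjunct mentions only $\trace$, which is compatible with $\healthy{RR}$.

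For the four atomic cases, I would unfold $\Phi$ and $\mathcal{E}$ and use that each of these operators pins the trace to its parameter $t$. In $\healthy{R4}(\csppf{s}{\sigma}{t})$ the conjunct $\trace > \langle\rangle$ therefore collapses to $t > \langle\rangle$, which is vacuously $\false$ when $t = \langle\rangle$ and identically $\true$ when $t = \langle a, \ldots\rangle$; this yields the two $\Phi$ laws. The two $\mathcal{E}$ laws are entirely symmetric: in $\healthy{R5}(\cspen{s}{t}{E})$, the conjunct $\trace = \langle\rangle$ collapses to $t = \langle\rangle$, which is $\true$ in the empty-trace case and $\false$ otherwise.

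The only real obstacle is managing the $\healthy{RR}$ wrapper. Because $\Phi$ and $\mathcal{E}$ are defined as $\healthy{RR}(\cdots)$, strictly speaking one cannot just pull the $\healthy{R4}$ or $\healthy{R5}$ conjunct into the body without justification. I would discharge this with a small auxiliary lemma establishing that $\healthy{R4}$ and $\healthy{R5}$ commute with $\healthy{RR}$ on the relevant arguments, or equivalently that both $\trace > \langle\rangle$ and $\trace = \langle\rangle$ are themselves $\healthy{RR}$-healthy conditions that can be pushed through the wrapper. Once this commutation is available, each of the six equalities reduces to the elementary trace reasoning above and the whole theorem falls out cleanly --- which is also why one would expect the Isabelle/UTP discharge to be essentially a one-liner invoking the existing $\healthy{RR}$ and trace-arithmetic simplification sets.
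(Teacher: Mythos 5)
Your proposal is correct and matches the paper's (mechanised, not spelled out in the text) proof: Theorem~\ref{thm:filtering} is discharged in Isabelle/UTP precisely by unfolding $\healthy{R4}$, $\healthy{R5}$, $\Phi$, and $\mathcal{E}$, distributing the trace conjunct through $\bigvee$ and the $\healthy{RR}$ wrapper, and simplifying the resulting trace conditions. You also correctly isolate the only non-trivial step --- justifying that the conjuncts $\trace > \langle\rangle$ and $\trace = \langle\rangle$ commute with $\healthy{RR}$, which holds because they are expressed purely in terms of $\trace = tr' - tr$ and are therefore $\healthy{R1}$/$\healthy{R2}_c$-invariant and independent of $ok$ and $wait$.
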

Both operators distribute through $\bigvee$. Relations that produce an empty trace yield $\false$ under $\healthy{R4}$
and are unchanged under $\healthy{R5}$. Relations that produce a non-empty trace yield $\false$ for $\healthy{R5}$, and
are unchanged underr $\healthy{R4}$. We can now filter the behaviours that do and do not resolve the choice, as
exemplified below.

\begin{example} Consider the calculation of $a\!\then\!b\!\then\!\Skip \extchoice c\!\then\!\Skip$. The left branch has
  two quiescent observations, one waiting for $a$, and one for $b$ having performed $a$: its pericondition is
  $\cspen{true}{\langle \rangle}{\{a\}} \lor \cspen{true}{\langle a \rangle}{\{b\}}$.  Application of \healthy{R5} to
  this will yield the first disjunct, since the trace has not increased, and \healthy{R4} will yield the second
  disjunct. For the right branch there is one quiescent observation, $\cspen{true}{\langle \rangle}{\{c\}}$, which
  contributes an empty trace and is $\healthy{R5}$ only. The overall pericondition is
  $(\cspen{true}{\langle \rangle}{\{a\}} \land \cspen{true}{\langle \rangle}{\{c\}}) \lor \cspen{true}{\langle a
    \rangle}{\{b\}}$,
  which is simply $\cspen{true}{\langle \rangle}{\{a, c\}} \lor \cspen{true}{\langle a \rangle}{\{b\}}$. \qed
\end{example}

\noindent By calculation we can now prove that $(\theoryset{NCSP}, \extchoice, \Stop)$ forms a commutative and
idempotent monoid, and $\Chaos$, the divergent program, is its annihilator. Sequential composition also distributes from
the left and right through external choice, but only when the choice branches are productive~\cite{Foster17c}.

\begin{definition}
  A contract $\rc{P_1}{P_2}{P_3}$ is productive when $P_3$ is \healthy{R4} healthy.
\end{definition}

\noindent A productive contract is one that, whenever it terminates, strictly increases the trace. For example
$a \then \Skip$ is productive, but $\Skip$ is not. Constructs that do not terminate, like $\Chaos$, are also
productive. The imposition of \healthy{R4} ensures that only final observations that increase the trace, or are $\false$,
are admitted.

We define healthiness condition $\healthy{PCSP}$, which extends $\healthy{NCSP}$ with productivity. We also define
$\healthy{ICSP}$, which formalises instantaneous contracts where the postcondition is \healthy{R5} healthy and the
pericondition is $\false$. For example, both $\Skip$ and $x := v$ are $\healthy{ICSP}$ healthy as they do not contribute
to the trace and have no intermediate observations. This allows us to prove the following laws.

\begin{theorem}[External Choice Distributivity]
  \begin{align*}
    (\Extchoice i\!\in\!I @ P(i)) \relsemi Q &~=~ \Extchoice i\!\in\!I @ (P(i) \relsemi Q) & ~~[\text{if, } \forall i\!\in\!I, P(i) \text{ is } \healthy{PCSP} \text{ healthy}] \\[.3ex]
    P \relsemi (\Extchoice i\!\in\!I @ Q(i)) &~=~ \Extchoice i\!\in\!I @ (P \relsemi Q(i)) & ~~[\text{if } P \text{ is } \healthy{ICSP} \text{ healthy}]
  \end{align*}
\end{theorem}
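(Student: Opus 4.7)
The plan is to handle each equation separately, in both cases by expanding both sides into the contract form $\rc{\cdot\,}{\,\cdot\,}{\,\cdot}$ using Theorem~\ref{thm:rc-comp} and Definition~\ref{def:ext-choice}, and then matching the precondition, pericondition, and postcondition components. Write $P(i) = \rc{P_1(i)}{P_2(i)}{P_3(i)}$ and $Q = \rc{Q_1}{Q_2}{Q_3}$ for the left identity; and $P = \rc{P_1}{\false}{P_3}$ (with $P_3$ being $\healthy{R5}$-healthy) together with $Q(i) = \rc{Q_1(i)}{Q_2(i)}{Q_3(i)}$ for the right identity.

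For both equations, the precondition equality reduces via Theorem~\thmeqref{thm:rc-comp}{thm:rc-seq} together with the distributivity law $(\bigvee_i P(i))\wpR R = \bigwedge_i (P(i) \wpR R)$ quoted earlier, giving a routine shuffle of indexed meets. The postcondition equality follows at once from left and right distributivity of $\relsemi$ over $\bigvee$, which holds in any Boolean quantale. Hence all the real work is in the pericondition, and in each direction it hinges on Theorem~\ref{thm:filtering} to push the $\healthy{R4}$ and $\healthy{R5}$ filters across a sequential composition.

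For the left identity, the hypothesis that each $P(i)$ is $\healthy{PCSP}$-healthy means $P_3(i)$ is $\healthy{R4}$-healthy, so any composition $P_3(i) \relsemi Q_2$ strictly extends the trace. Consequently $\healthy{R5}(P_3(i) \relsemi Q_2) = \false$ and $\healthy{R4}(P_3(i) \relsemi Q_2) = P_3(i) \relsemi Q_2$. Distributing $\healthy{R5}$ and $\healthy{R4}$ over the disjunct $P_2(i) \lor (P_3(i) \relsemi Q_2)$ that appears inside the RHS external choice therefore yields exactly the three summands $\bigwedge_i \healthy{R5}(P_2(i))$, $\bigvee_i \healthy{R4}(P_2(i))$, and $\bigvee_i (P_3(i) \relsemi Q_2)$ that Theorem~\ref{thm:rc-comp} produces when sequentially composing the LHS external choice with $Q$.

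For the right identity, instantaneity of $P$ collapses the LHS pericondition (after Theorem~\ref{thm:rc-comp}) to $P_3 \relsemi (\text{peri of}~\Extchoice_i Q(i))$; and because $P_3$ is $\healthy{R5}$-healthy, it commutes with both trace filters, i.e.\ $\healthy{R4}(P_3 \relsemi X) = P_3 \relsemi \healthy{R4}(X)$ and similarly for $\healthy{R5}$. This is exactly what is needed to match the RHS after applying Theorem~\ref{thm:rc-comp} to each $P \relsemi Q(i)$ and then Definition~\ref{def:ext-choice}. The main obstacle is that the $\bigwedge_i$ inside the \healthy{R5}-branch of the pericondition has to be moved across $P_3 \relsemi -$, and in general sequential composition does not distribute through conjunction. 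The argument must therefore lean on the fact that $\healthy{ICSP}$ pins the postcondition of $P$ down to a deterministic $\Phi$-shaped, trace-preserving relation, for which this particular distributivity does go through; once that step is discharged the remainder is direct algebraic rearrangement already covered by Theorem~\ref{thm:filtering}.
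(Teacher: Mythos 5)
Your overall strategy --- expand both sides with Theorem~\ref{thm:rc-comp} and Definition~\ref{def:ext-choice}, match the three components, and concentrate all the real work in the pericondition --- is exactly the calculational route the paper takes (the paper itself only gives an informal gloss and defers the calculation to the mechanisation). The precondition and postcondition arguments are fine, except that for the second law you need conjunctivity of $\wpR$ in its \emph{second} argument, $P \wpR (\bigwedge_i R(i)) = \bigwedge_i (P \wpR R(i))$, rather than the quoted law for disjunctions on the left; both follow from distributivity of $\relsemi$ over $\bigvee$, so this is cosmetic.

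Two substantive provisos on the pericondition. First, Theorem~\ref{thm:filtering} as stated covers only $\bigvee$, $\Phi$ and $\mathcal{E}$, not sequential composition, so the steps you attribute to it --- $\healthy{R5}(P_3(i) \relsemi Q_2) = \false$ and $\healthy{R4}(P_3(i) \relsemi Q_2) = P_3(i) \relsemi Q_2$ when $P_3(i)$ is $\healthy{R4}$-healthy, and the commutations $\healthy{R4}(P_3 \relsemi X) = P_3 \relsemi \healthy{R4}(X)$, $\healthy{R5}(P_3 \relsemi X) = P_3 \relsemi \healthy{R5}(X)$ when $P_3$ is $\healthy{R5}$-healthy --- are additional closure lemmas that must be established from $\healthy{R1}$ (traces only extend under composition of reactive relations). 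They are true and are what the mechanisation relies on, but they are not instances of the cited theorem. Second, your diagnosis of the obstacle in the second law is correct and genuinely load-bearing: if $\healthy{ICSP}$ demanded only ``pericondition $\false$ and postcondition $\healthy{R5}$-healthy'', the law would actually fail. Take $P = (x:=1 \sqcap x:=2)$ and branches $Q(1),Q(2)$ guarded by $x=1$ and $x=2$ offering events $a$ and $b$ respectively: the left side's pericondition is $\cspen{true}{\langle\rangle}{\{a\}} \lor \cspen{true}{\langle\rangle}{\{b\}}$, guaranteeing at least one acceptance, whereas the right side's conjunction of $\healthy{R5}$-branches collapses to $\trace = \langle\rangle$ with no acceptance guarantee. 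So distributing $P_3 \relsemi {-}$ over $\bigwedge_i$ really does require the postcondition to be a functional update $\csppf{b}{\sigma}{\langle\rangle}$, for which composition becomes substitution and hence distributes over $\land$; that determinism must be part of what $\healthy{ICSP}$ means, as you assume. With these two points made explicit, the proposal is correct and matches the intended proof.
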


\noindent The first law follows because every $P(i)$, being productive, must resolve the choice before terminating, and
thus it is not possible to reach $Q$ before this occurs. It generalises the standard guarded choice distribution law for
CSP~\cite[page 211]{Hoare&98}. The second law follows for the converse reason: since $P$ cannot resolve the choice with
any of its behaviour, it is safe to execute it first.

Productivity also forms an important criterion for guarded recursion that we utilise in \S\ref{sec:iter} to
calculate fixed points. \healthy{PCSP} is closed under several operators.

\begin{theorem} \textnormal{(1)} $\Miracle$, $\Stop$, $\ckey{Do}(a)$ are \healthy{PCSP}; \textnormal{(2)} $P\!\relsemi\!Q$ is \healthy{PCSP} if either $P$ or $Q$ is \healthy{PCSP}; \textnormal{(3)} $\Extchoice i \in I @ P(i)$ is \healthy{PCSP} if, for all $i\!\in\!I$, $P(i)$ is \healthy{PCSP}.
%   \vspace{-.5ex}
%   \begin{itemize}
%     \item $\Miracle$, $\Chaos$, $\Stop$, and $\ckey{Do}(a)$ are all \healthy{PCSP} healthy;
% %    \item $b \guard P$ is \healthy{PCSP} if $P$ is \healthy{PCSP};
%     \item $P \relsemi Q$ is \healthy{PCSP} if either $P$ or $Q$ is \healthy{PCSP};
% %    \item $\Intchoice i \in I @ P(i)$ is \healthy{PCSP} if, for all $i \in I$, $P(i)$ is \healthy{PCSP};
%     \item $\Extchoice i \in I @ P(i)$ is \healthy{PCSP} if, for all $i \in I$, $P(i)$ is \healthy{PCSP}.
%   \end{itemize}
\end{theorem}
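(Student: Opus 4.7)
The plan is to unfold $\healthy{PCSP}$ as the conjunction of $\healthy{NCSP}$\nobreakdash-healthiness together with $\healthy{R4}$-healthiness of the postcondition. Since closure of $\Miracle$, $\Stop$, $\ckey{Do}(a)$, $\relsemi$, and $\Extchoice$ under $\healthy{NCSP}$ is already established in the reactive contract theory, the new content is to check that the postcondition component of each construction remains $\healthy{R4}$-healthy. Each case then reduces to a small calculation using Definition~\ref{thm:bcircus-def}, Theorem~\ref{thm:rc-comp}, Definition~\ref{def:ext-choice}, and Theorem~\ref{thm:filtering}.

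For clause (1), I would inspect the three basic contracts directly. The postconditions of $\Miracle$ and $\Stop$ are both $\false$, which is trivially $\healthy{R4}$-healthy since $\false \land \trace > \langle\rangle = \false$. For $\ckey{Do}(a)$, Definition~\ref{thm:bcircus-def} gives postcondition $\csppf{\ptrue}{id}{\langle a \rangle}$, and the last clause of Theorem~\ref{thm:filtering}, $\healthy{R4}(\csppf{s}{\sigma}{\langle a,\ldots\rangle}) = \csppf{s}{\sigma}{\langle a,\ldots\rangle}$, yields the claim.

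For clause (3), Definition~\ref{def:ext-choice} tells us that the postcondition of $\Extchoice i\in I @ P(i)$ is $\bigvee_{i\in I} P_3(i)$. Assuming every $P(i)$ is $\healthy{PCSP}$, each $P_3(i)$ is $\healthy{R4}$-healthy, so the disjunctive distribution in Theorem~\ref{thm:filtering} gives
\begin{equation*}
  \healthy{R4}\!\left(\textstyle\bigvee_{i\in I} P_3(i)\right) = \textstyle\bigvee_{i\in I} \healthy{R4}(P_3(i)) = \textstyle\bigvee_{i\in I} P_3(i),
\end{equation*}
finishing the case.

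The main obstacle is clause (2). By Theorem~\ref{thm:rc-comp}-\ref{thm:rc-seq}, the postcondition of $P\relsemi Q$ is the relational composition $P_3 \relsemi Q_3$, so I need a lemma: if $P_3$ and $Q_3$ are $\healthy{RR}$-healthy and at least one is $\healthy{R4}$-healthy, then $P_3 \relsemi Q_3$ is $\healthy{R4}$-healthy. The intuition is that $\healthy{RR}$ forces traces to weakly extend via $\trace = tt_P \cat tt_Q$ under sequential composition, so a single strictly-extending factor suffices. The easy subcase is when $P_3$ is $\healthy{R4}$-healthy, since then $tt_P > \langle\rangle$ already. The harder subcase is $Q_3$ being $\healthy{R4}$-healthy, where I would argue that $\trace = tt_P \cat tt_Q$ with $tt_Q > \langle\rangle$ implies $\trace > \langle\rangle$, using the monoid laws for $\cat$ and the $\le$ ordering on traces. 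This gives $(P_3 \relsemi Q_3) \land \trace > \langle\rangle = P_3 \relsemi Q_3$, which is precisely $\healthy{R4}$-healthiness. Combined with existing $\healthy{NCSP}$-closure under $\relsemi$, clause (2) follows.
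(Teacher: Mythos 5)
Your proposal is correct and follows exactly the decomposition the paper's definitions dictate: the paper gives no textual proof of this theorem (deferring to the Isabelle/UTP mechanisation in \cite{Foster-SFRD-UTP}), and your reduction to $\healthy{R4}$-healthiness of the postcondition via Theorems~\ref{thm:rc-comp}, \ref{thm:filtering} and Definition~\ref{def:ext-choice}, together with assumed $\healthy{NCSP}$-closure, is precisely the content that mechanisation discharges. The one genuinely non-trivial step is the lemma in clause (2), and your justification is the right one: since $P_3$ and $Q_3$ are $\healthy{R1}$-healthy, the intermediate trace satisfies $tr \le tr_0 \le tr'$ with one inequality strict, so $\trace > \langle\rangle$ follows from positivity of the trace monoid ($tt_P \cat tt_Q = \langle\rangle$ only if both factors are empty).
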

Calculation of external choice is now supported, and a notion of productivity defined. In the next section we use
the latter for calculation of while-loops.

\section{While Loops and Reactive Invariants}
\label{sec:iter}
In this section we complete our verification strategy by adding support for iteration. Iterative programs can be
constructed using the reactive while loop.
$$\while{b}{P} ~~\defs~~ (\mu X @ \conditional{P \relsemi X}{b}{\Skip}).$$ 
\noindent We use the weakest fixed-point so that an infinite loop with no observable activity corresponds to the
divergent action $\Chaos$, rather than $\Miracle$. For example, we can show that
$(\while{true}{~x\!:=\! x\! +\! 1~}) = \Chaos$. The $true$ condition is not a problem in this context because, unlike
its imperative cousin, the reactive while loop pauses for interaction with its environment during execution, and
therefore infinite executions are observable and therefore potentially useful.

In order to reason about such behaviour, we need additional calculational laws. A fixed-point ($\mu X @ F(X)$) is
guarded provided at least one event is contributed to the trace by $F$ prior to it reaching $X$. For instance,
$\mu X @ a \then X$ is guarded, but $\mu X @ y := 1 \relsemi X$ is not. Hoare and He's theorem~\cite[theorem
8.1.13, page 206]{Hoare&98} states that if $F$ is guarded, then there is a unique fixed-point and hence
$(\mu X @ F(X)) = (\nu X @ F(X))$. Then, provided $F$ is continuous, we can invoke Kleene's fixed-point theorem to
calculate $\nu F$. Our previous result~\cite{Foster17c} shows that if $P$ is productive, then $\lambda X @ P \relsemi X$
is guarded, and so we can calculate its fixed-point. We now generalise this for the function above.

\begin{theorem} If $P$ is productive, then $(\mu X @ \conditional{P \relsemi X}{b}{\Skip})$ is guarded.
\end{theorem}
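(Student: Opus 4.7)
The plan is to reduce the claim to two facts that are already in hand: the previously cited result that $\lambda X \mathrel{@} P \relsemi X$ is guarded whenever $P$ is productive, and the observation that a constant function is trivially guarded. Guardedness is then shown to be preserved by the reactive conditional, and the theorem follows immediately.

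First I would fix a working characterisation of guardedness. Informally, a function $F$ is guarded if, for any relation $X$, the behaviours of $F(X)$ are determined by $X$ only on traces that have been strictly extended by $F$ first; formally, this amounts to $F$ factoring through $\healthy{R4}$-compositions in the $X$ position, which is the criterion underlying Hoare and He's theorem 8.1.13. Under this characterisation, any function of $X$ whose output does not actually depend on $X$ is vacuously guarded. In particular, the constant function $\lambda X \mathrel{@} \Skip$ is guarded.

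Next I would recall the cited result: if $P$ is productive, i.e.\ the postcondition of $P$ is $\healthy{R4}$-healthy, then $\lambda X \mathrel{@} P \relsemi X$ is guarded. This is immediate because the only way $X$ contributes to $P \relsemi X$ is through the postcondition of $P$, whose $\healthy{R4}$-healthiness forces at least one event to be appended to the trace before $X$ is entered.

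The remaining step is a preservation lemma: if $F_1$ and $F_2$ are guarded, then so is $\lambda X \mathrel{@} \conditional{F_1(X)}{b}{F_2(X)}$, where $b$ is a condition on unprimed state variables and hence does not involve $X$. This is a routine calculation from the definition of the conditional as $(b \land F_1(X)) \lor (\neg b \land F_2(X))$, using the fact that $b$ partitions initial states and that guardedness is preserved by disjunction and by conjunction with an $X$-independent predicate. Instantiating the lemma with $F_1 = \lambda X \mathrel{@} P \relsemi X$ and $F_2 = \lambda X \mathrel{@} \Skip$ yields that $\lambda X \mathrel{@} \conditional{P \relsemi X}{b}{\Skip}$ is guarded, so its weakest fixed-point $\while{b}{P}$ has a unique fixed-point and coincides with $\nu X \mathrel{@} F(X)$.

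The main obstacle is the preservation lemma for the conditional: the trace-based definition of guardedness has to be unfolded carefully enough to make the partition by $b$ go through, since $b$ depends on the initial state which can in principle be mentioned arbitrarily by $F_1$ and $F_2$. Once that book-keeping is settled, however, the proof is essentially a one-line case split and then an appeal to the two facts above.
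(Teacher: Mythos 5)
Your proposal is correct and follows essentially the same route as the paper: it combines the previously cited result that $\lambda X \mathrel{@} P \relsemi X$ is guarded for productive $P$ with exactly the two auxiliary facts the paper invokes, namely that a function not mentioning $X$ (here $\lambda X \mathrel{@} \Skip$) is guarded and that guardedness is preserved by the conditional. The additional detail you give on why the conditional preservation lemma holds is a reasonable elaboration of what the paper leaves implicit.
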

\begin{proof} In addition to our previous theorem~\cite{Foster17c}, we use the following properties:
  \vspace{-1ex}
  \begin{itemize}
    \item If $X$ is not mentioned in $P$ then $\lambda X @ P$ is guarded;
    \item If $F$ and $G$ are both guarded, then $\lambda X @ \conditional{F(X)}{b}{G(X)}$ is guarded. \qed
  \end{itemize} 
\end{proof}
This allows us to convert the fixed-point into an iterative form. In particular, we can prove the following theorem that
expresses it in terms of Kleene star.
\begin{theorem} If $P$ is \healthy{PCSP} healthy then
  $\while{b}{P} = (\rasm{b} \relsemi P)\bm{\star} \relsemi \rasm{\neg b}$.
\end{theorem}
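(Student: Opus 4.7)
The plan is to reduce the fixed-point that defines $\while{b}{P}$ to the canonical form $\mu X @ A \relsemi X \intchoice Z$, and then invoke the standard Kleene-algebra identity that $A\bm{\star} \relsemi Z$ is the least fixed-point of $\lambda X @ A \relsemi X \intchoice Z$, applied inside the theory $\theoryset{NCSP}$, which is a weak Kleene algebra by Theorem~\ref{thm:kautp}.

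First I would unfold the conditional in the definition of $\while{b}{P}$ using the state test. Because $b$ mentions only initial state variables, the identity $\conditional{F}{b}{G} = (\rasm{b} \relsemi F) \intchoice (\rasm{\neg b} \relsemi G)$ holds directly from the definitions of $\conditional{\cdot}{\cdot}{\cdot}$ and $\rasm{\cdot}$. Combined with the unit law $\rasm{\neg b} \relsemi \Skip = \rasm{\neg b}$ and associativity of $\relsemi$, the definition becomes
$$\while{b}{P} = \mu X @ A \relsemi X \intchoice Z, \qquad \text{where } A \defs \rasm{b} \relsemi P, \quad Z \defs \rasm{\neg b}.$$
Next I would use the productivity hypothesis: since $P$ is \healthy{PCSP} and $\rasm{b}$ is \healthy{NCSP}, closure of \healthy{PCSP} under $\relsemi$ gives that $A$ is \healthy{PCSP}, and the preceding theorem then shows that $\lambda X @ A \relsemi X \intchoice Z$ is guarded, so its weakest and strongest fixed-points coincide. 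Finally, in the weak KA $(\theoryset{NCSP}, \intchoice, \Miracle, \relsemi, \Skip, \bm{\star})$ guaranteed by Theorem~\ref{thm:kautp}, the unfold equation $A\bm{\star} = \Skip \intchoice A \relsemi A\bm{\star}$ (Theorem~\ref{thm:kalaws}) exhibits $A\bm{\star} \relsemi Z$ as a fixed-point of $\lambda X @ A \relsemi X \intchoice Z$, and the left-induction axiom (Definition~\ref{def:wka}, item~3) shows it is the least one, yielding $\mu X @ A \relsemi X \intchoice Z = A\bm{\star} \relsemi Z$.

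The main obstacle I anticipate is justifying the first rewrite: although $\conditional{F}{b}{G} = (\rasm{b} \relsemi F) \intchoice (\rasm{\neg b} \relsemi G)$ is intuitive, verifying it for bodies that are arbitrary \healthy{NCSP}-healthy contracts (rather than ordinary relations), and checking that the associativity and unit manipulations stay within $\theoryset{NCSP}$, requires some bookkeeping. Once this syntactic reduction is secured, the rest of the argument is a direct instantiation of the KA machinery of \S\ref{sec:utp-kleene} together with the productivity hypothesis carried through $A$.
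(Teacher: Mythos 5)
Your proposal is correct and follows the route the paper intends: the theorem is stated without an explicit proof, but the surrounding text sets up exactly your argument --- rewrite the conditional as a guarded choice $(\rasm{b} \relsemi P \relsemi X) \intchoice \rasm{\neg b}$, use productivity to obtain guardedness of the generating function and hence coincidence of the weakest and strongest fixed-points, and then identify that fixed-point with $(\rasm{b} \relsemi P)\bm{\star} \relsemi \rasm{\neg b}$ using the Kleene-algebra structure of $\theoryset{NCSP}$ from \S\ref{sec:utp-kleene}. One point worth making explicit in a write-up: star unfold plus star induction characterise the $\le$-least, i.e.\ $\refinedby$-greatest (strongest), fixed-point $\nu$, so the guardedness step is precisely what licenses transferring that characterisation to the weakest fixed-point $\mu$ in the definition of the loop --- your ordering of the steps already accounts for this, as does your honest flagging of the bookkeeping needed to interpret $\rasm{b}$ and the conditional at the contract rather than relation level.
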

This theorem is similar to the usual imperative definition~\cite{Armstrong2015,Gomes2016}. $P$ is executed multiple
times when $b$ is true initially, but each run concludes when $b$ is false. However, due to the embedding of reactive
behaviour, there is more going on than meets the eye; the next theorem shows how to calculate an iterative contract.

\begin{theorem} If $R$ is \healthy{R4} healthy then \label{thm:whilecalc} 
  \begin{align*}
  \while{b}{\rc{\!P\!}{\!Q\!}{\!R\!}} =& \rc{(\rasm{b}\!\relsemi\!R)\bm{\star} \wpR (b\!\implies\!P)}{(\rasm{b} \relsemi R)\bm{\star}\!\relsemi\!\rasm{b}\!\relsemi\!Q}{(\rasm{b}\!\relsemi\!R)\bm{\star}\!\relsemi\!\rasm{\neg b}}
  \end{align*}
\end{theorem}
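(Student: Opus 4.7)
The plan is to reduce the while-loop to a Kleene-star expression via the previous theorem, push the guard $\rasm{b}$ through the contract to produce a single contract, apply the iteration calculus of Theorem~\ref{thm:rc-iter}, and finally post-compose with $\rasm{\neg b}$ using Theorem~\ref{thm:rc-comp}.

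First, I would verify the healthiness side-condition: since $R$ is \healthy{R4} healthy, the contract $\rc{P}{Q}{R}$ is productive by definition, hence \healthy{PCSP} healthy. This unlocks the previous theorem, giving $\while{b}{\rc{P}{Q}{R}} = (\rasm{b}\relsemi\rc{P}{Q}{R})\bm{\star}\relsemi\rasm{\neg b}$. Next, I would push $\rasm{b}$ inside the contract. Viewing $\rasm{b}$ as the contract with precondition $\truer$, pericondition $\false$, and postcondition $\csppf{b}{id}{\langle\rangle}$, an application of Theorem~\ref{thm:rc-comp}-\ref{thm:rc-seq} combined with Theorem~\ref{thm:evwp} (to rewrite $\csppf{b}{id}{\langle\rangle}\wpR P$ as $b\!\implies\!P$) and Theorem~\ref{thm:crel-comp}-\ref{thm:crc1} (to identify $\csppf{b}{id}{\langle\rangle}\relsemi\_$ with $\rasm{b}\relsemi\_$ on the reactive-relation level) yields
$$\rasm{b}\relsemi\rc{P}{Q}{R} ~=~ \rc{b\implies P}{\rasm{b}\relsemi Q}{\rasm{b}\relsemi R}.$$

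Then I would apply Theorem~\ref{thm:rc-iter} to this contract, obtaining
$$(\rasm{b}\relsemi\rc{P}{Q}{R})\bm{\star} ~=~ \rc{(\rasm{b}\relsemi R)\bm{\star}\wpR(b\!\implies\!P)}{(\rasm{b}\relsemi R)\bm{\star}\relsemi\rasm{b}\relsemi Q}{(\rasm{b}\relsemi R)\bm{\star}}.$$
Finally, I would compose with $\rasm{\neg b}$ using Theorem~\ref{thm:rc-comp}-\ref{thm:rc-seq} once more. Because $\rasm{\neg b}$ has precondition $\truer$ and pericondition $\false$, the precondition and pericondition of the product collapse exactly to the forms claimed in the statement, while the postcondition becomes $(\rasm{b}\relsemi R)\bm{\star}\relsemi\rasm{\neg b}$.

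The main obstacle I foresee is twofold. First, I must carefully track the distinction between the theory-level Kleene star in $\theoryset{NCSP}$ (with $\Skip$ as unit) used to unfold the while loop, and the theory-level star in $\theoryset{RR}$ (with $\IIr$ as unit) appearing in Theorem~\ref{thm:rc-iter} for the postcondition; Theorem~\ref{thm:kautp} licenses both, but closure of the constructions under $\healthy{NCSP}$ and $\healthy{RR}$ needs checking so that the two levels of star match up via the $\rc{-}{-}{-}$ constructor. Second, I must verify the $\wpR$-simplifications in the final composition with $\rasm{\neg b}$: in particular, that $(\rasm{b}\relsemi R)\bm{\star}\wpR\truer$ reduces to $\truer$, so that the precondition does not acquire spurious conjuncts. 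Both points should follow from the weakest-precondition laws listed after Theorem~\ref{thm:rc-comp} together with the Kleene UTP machinery of \S\ref{sec:utp-kleene}, so the proof is essentially a disciplined calculation rather than a deep argument.
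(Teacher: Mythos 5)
Your proposal is correct and follows precisely the route the paper intends: the paper prints no proof, but the immediately preceding theorem $\while{b}{P} = (\rasm{b}\relsemi P)\bm{\star}\relsemi\rasm{\neg b}$, together with Theorems~\ref{thm:rc-iter}, \thmeqref{thm:rc-comp}{thm:rc-seq} and \ref{thm:evwp}, is exactly the machinery assembled for this calculation, and your handling of the two levels of star and of $(\rasm{b}\relsemi R)\bm{\star}\wpR\truer = \truer$ addresses the only delicate points. The one elision is that $R$ being \healthy{R4} gives productivity but \healthy{PCSP} additionally requires \healthy{NCSP}-healthiness of the contract; that is the section's ambient assumption, so it is implicit in the paper as well.
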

The precondition requires that any number of $R$ iterations, where $b$ is initially true, satisfies $P$. This ensures
that the contract does not violate its own precondition from one iteration to the next. The pericondition states that
intermediate observations have $R$ executing several times, with $b$ true, and following this $b$ remains true and
the contract is quiescent ($Q$). The postcondition is similar, but after several iterations, $b$
becomes false and the loop terminates, which is the standard relational form of a while loop.

Theorem~\ref{thm:whilecalc} can be utilised to prove a refinement introduction law for the reactive while loop. This
employs ``reactive invariant'' relations, which describe how both the trace and state variables are permitted to evolve.
\begin{theorem}$\rc{\!I_1\!}{\!I_2\!}{\!I_3\!} \refinedby \while{b}{\rc{\!Q_1\!}{\!Q_2\!}{\!Q_3\!}}$
  provided that: \label{thm:rea-inv}
  \vspace{-.5ex}
  \begin{enumerate} \itemsep4pt
    \item the assumption is weakened $((\rasm{b} \relsemi Q_3)\bm{\star} \wpR (b \implies Q_1) \refinedby I_1)$;
    \item when $b$ holds, $Q_2$ establishes the $I_2$ pericondition invariant $(I_2 \refinedby (\rasm{b} \relsemi Q_2))$ and,
      $Q_3$ maintains it $(I_2 \refinedby \rasm{b} \relsemi Q_3 \relsemi I_2)$;
    \item postcondition invariant $I_3$ is established when $b$ is false ($I_3 \refinedby \rasm{\neg b}$) and $Q_3$
      establishes it when $b$ is true ($I_3 \refinedby \rasm{b} \relsemi Q_3 \relsemi I_3$).
  \end{enumerate}
\end{theorem}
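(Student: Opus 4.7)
The plan is to reduce the goal to three separate obligations on pre-, peri-, and postcondition by unfolding the right-hand side, and then to discharge the two iterative ones using the Kleene star induction axiom~(\ref{thm:starinductr}) of Definition~\ref{def:wka}, lifted to the reactive relations Kleene algebra via Theorem~\ref{thm:kautp}.

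First I would apply Theorem~\ref{thm:whilecalc} to the right-hand side, obtaining the contract
\[
\rc{(\rasm{b}\!\relsemi\!Q_3)\bm{\star} \wpR (b\!\implies\!Q_1)}{(\rasm{b} \relsemi Q_3)\bm{\star}\!\relsemi\!\rasm{b}\!\relsemi\!Q_2}{(\rasm{b}\!\relsemi\!Q_3)\bm{\star}\!\relsemi\!\rasm{\neg b}}.
\]
The contract refinement rule from~\S\ref{sec:intro} then splits the goal into three proof obligations: the bracketed precondition is refined by $I_1$; $I_2$ refines the bracketed pericondition conjoined with $I_1$; and $I_3$ refines the bracketed postcondition conjoined with $I_1$. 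The first is literally hypothesis~1 of the theorem. For the other two, conjoining $I_1$ on the right only weakens the refined expression, so it suffices to establish the refinements without the $I_1$ conjunct.

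For the pericondition, set $x \defs \rasm{b} \relsemi Q_3$, $z \defs \rasm{b} \relsemi Q_2$, and $y \defs I_2$. Reading axiom~(\ref{thm:starinductr}) through the refinement order (under which the KA's $\le$ corresponds to $\sqsupseteq$), its hypotheses become $y \refinedby z$ and $y \refinedby x \relsemi y$, and its conclusion is $y \refinedby x\bm{\star} \relsemi z$. These premises are precisely the two clauses of hypothesis~2, so the conclusion yields the required pericondition refinement. The postcondition obligation is handled identically, now with $z \defs \rasm{\neg b}$ and $y \defs I_3$, the two clauses of hypothesis~3 supplying the induction premises.

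The main obstacle is navigating this order inversion carefully: axiom~(\ref{thm:starinductr}) is phrased in the $\le$ direction of a weak KA, whereas reactive contracts are reasoned about in $\refinedby$, so one must be precise that the right-star form $x\bm{\star} \relsemi z$ matches axiom~(\ref{thm:starinductr}) rather than its mirror~(4). A secondary concern is that Theorem~\ref{thm:whilecalc} requires $Q_3$ to be $\healthy{R4}$-healthy, which is not assumed in Theorem~\ref{thm:rea-inv} itself; one should either add this as a side condition (productivity of the body), or verify that the $\healthy{R5}$-portion of $Q_3$ factors harmlessly through the star before applying the reduction.
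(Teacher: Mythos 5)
Your proposal matches the paper's proof exactly: refinement introduction splits the goal into the three obligations, Theorem~\ref{thm:whilecalc} supplies the calculated contract for the loop, and axiom~\ref{thm:starinductr} of Definition~\ref{def:wka} (read through the inverted order, via Theorem~\ref{thm:kautp}) discharges the peri- and postcondition cases with precisely the instantiations you give. Your remark that Theorem~\ref{thm:whilecalc} requires $Q_3$ to be $\healthy{R4}$-healthy is a fair catch --- the paper leaves that side condition implicit in the statement of Theorem~\ref{thm:rea-inv}, and it is discharged in practice because the loop bodies to which the rule is applied are productive.
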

\begin{proof}
  By application of refinement introduction, with Theorems~\thmeqref{def:wka}{thm:starinductr} and \ref{thm:whilecalc}.
\end{proof}
\noindent Theorem~\ref{thm:rea-inv} shows the conditions under which an iterated reactive contract satisfies an
invariant contract $\rc{\!I_1\!}{\!I_2\!}{\!I_3\!}$. Relations $I_2$ and $I_3$ are reactive invariants that must hold in
quiescent and final observations, respectively. Both can refer to $\state$ and $\trace$, $I_2$ can additionally refer to
$ref'$, and $I_3$ to $\state'$. Combined with the results from \S\ref{sec:circus-rc} and \S\ref{sec:ext-choice},
this result provides the basis for a proof strategy for iterative reactive programs that we now exemplify.

\section{Verification Strategy for Reactive Programs}
\label{sec:verify}
Our collected results give rise to an automated verification strategy for iterative reactive programs, whereby we
(1)~calculate the contract of a reactive program, (2)~use our equational theory to simplify the underlying reactive
relations, (3)~identify suitable invariants for reactive while loops, and (4) finally prove refinements using relational
calculus. Although the underlying relations can be quite complex, our equational theory from \S\ref{sec:circus-rc} and
\S\ref{sec:ext-choice}, aided by the Isabelle/HOL simplifier, can be used to rapidly reduce them to more compact forms
amenable to automated proof. In this section we illustrate this strategy with the buffer in Example~\ref{ex:buffer}. We
prove two properties: (1) deadlock freedom, and (2) that the order of values produced is the same as those consumed.

We first calculate the contract of the main loop in the $Buffer$ process and then use this to calculate the overall
contract for the iterative behaviour.

\begin{theorem}[Loop Body]
  The body of the loop is $\rc{\truer}{B_2}{B_3}$ where
  \begin{align*}
    B_2 =~~& \textstyle\cspen{true}{\langle\rangle}{\bigcup_{v \in \nat}\, \{inp.v\} \cup (\conditional{\{out.head(bf)\}}{0 < \#bf}{\emptyset})} \\[1ex]
    B_3 =~~& \left(
      \begin{array}{l}
        \textstyle
        \left(\bigvee_{v \in \nat}\, \csppf{true}{\{bf \mapsto bf \cat \langle v \rangle\}}{\langle inp.v \rangle} \right) \lor \\[1ex]
        \csppf{0 < \#bf}{\{bf \mapsto tail(bf)\}}{\langle out.head(bf) \rangle}
      \end{array}
    \right)
  \end{align*}
  \noindent \textnormal{The $\truer$ precondition implies no divergence. The pericondition states that every input event
    is enabled, and the output event is enabled if the buffer is non-empty. The postcondition contains two
    possible final observations: (1) an input event occurred and the buffer variable was extended; or (2)
    provided the buffer was non-empty initially, then the buffer's head is output and $bf$ is contracted.}
\end{theorem}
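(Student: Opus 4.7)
The plan is to calculate the contract of each branch of the loop body, combine them via the indexed external choice formula in Definition~\ref{def:ext-choice}, and normalise the result using the equational laws of \S\ref{sec:circus-rc} together with the trace-filtering results of \S\ref{sec:ext-choice}.

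First, I would unfold the input branch $inp?v \then bf := bf \cat \langle v \rangle$ to $\Extchoice v \in \nat @ \ckey{Do}(inp.v) \relsemi \assignsC{\substmap{bf \mapsto bf \cat \langle v \rangle}}$. For each fixed $v$, unfolding Definition~\ref{thm:bcircus-def} and applying Theorem~\thmeqref{thm:rc-comp}{thm:rc-seq}, then simplifying with~\eqref{thm:crc3} and Theorem~\ref{thm:evwp}, yields the branch contract $\rc{\truer}{\cspen{true}{\langle\rangle}{\{inp.v\}}}{\csppf{true}{\substmap{bf \mapsto bf \cat \langle v \rangle}}{\langle inp.v \rangle}}$.

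Second, I would calculate the guarded output branch $(\#bf > 0) \guard out!(head(bf)) \then bf := tail(bf)$, which unfolds to $\conditional{(\ckey{Do}(out.head(bf)) \relsemi \assignsC{\substmap{bf \mapsto tail(bf)}})}{\#bf > 0}{\Stop}$. Sequentially composing the prefix and assignment as above, then applying~\eqref{thm:crc5} to push the conditional inside the enabled-set of the pericondition, and~\eqref{thm:crc4} to absorb $\Stop$'s $\false$ postcondition into the state condition of $\Phi$, yields $\rc{\truer}{\cspen{true}{\langle\rangle}{\conditional{\{out.head(bf)\}}{\#bf > 0}{\emptyset}}}{\csppf{\#bf > 0}{\substmap{bf \mapsto tail(bf)}}{\langle out.head(bf) \rangle}}$.

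Finally, I would combine both branches via Definition~\ref{def:ext-choice}, indexing over the disjoint union of $\nat$ and a singleton. All branch preconditions are $\truer$, so their conjunction is $\truer$. Each branch's pericondition is an $\mathcal{E}$ observation with empty trace, so Theorem~\ref{thm:filtering} reduces its \healthy{R4} image to $\false$ and leaves its \healthy{R5} image intact; law~\eqref{thm:crc6} then collapses the resulting conjunction over the common trace $\langle\rangle$ into a single $\mathcal{E}$ whose enabled-events set is the union of the branch-sets, giving $B_2$. The postcondition is simply the disjunction of branch postconditions, matching $B_3$. The main obstacle is the bookkeeping for the guarded branch: ensuring that $\Stop$'s $\cspen{true}{\langle\rangle}{\emptyset}$ fuses via~\eqref{thm:crc6} without spuriously refusing any input events when $\#bf = 0$, and that its $\false$ postcondition drops cleanly out of the disjunction once the conditional is absorbed into $\Phi$. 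The remainder is a routine equational rewrite that the Isabelle/UTP \textsf{rdes-eq} tactic should dispatch by applying the laws above together with the distributivity of external choice over $\bigvee$.
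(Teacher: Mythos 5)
Your proposal is correct and follows essentially the same route as the paper's proof, which carries out exactly this calculation for the input branch (unfolding Definition~\ref{thm:bcircus-def}, applying Theorems~\ref{thm:rc-comp}, \ref{thm:evwp}, \ref{thm:crel-comp} and then collapsing the indexed choice via Definition~\ref{def:ext-choice} and Theorem~\ref{thm:filtering}) and delegates the output branch and the final combination to the automated Isabelle/UTP tactic, just as you sketch. One trivial slip: the postcondition of $\ckey{Do}(inp.v) \relsemi \assignsC{\sigma}$ is a $\Phi \relsemi \Phi$ composition, so the relevant law there is \eqref{thm:crc2} rather than \eqref{thm:crc3}.
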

\begin{proof} To exemplify, we calculate the left-hand side of the choice, employing Theorems~\ref{thm:rc-comp},
  \ref{thm:bcircus-def}, \ref{thm:crel-comp}, and \ref{thm:filtering}. The entire calculation is automated in Isabelle/UTP.
  \begin{align*}
    & inp?v \then bf := bf \cat \langle v \rangle \\
    & = \Extchoice v\!\in\!\nat @ \ckey{Do}(inp.v) \relsemi bf := bf \cat \langle v \rangle & [\textnormal{Defs}]\\ 
    & = \Extchoice v\!\in\!\nat @ \left(
      \begin{array}{l}
        \rc{\truer}{\cspen{\ptrue}{\langle\rangle}{\{inp.v\}}}{\csppf{\ptrue}{id}{\langle inp.v \rangle}} \relsemi \\ 
        \rc{\truer}{\false}{\csppf{\ptrue}{\llparenthesis bf \mapsto bf \cat \langle v \rangle \rrparenthesis}{\langle\rangle}}
      \end{array} \right) & [\ref{thm:bcircus-def}] \\[.5ex]
    & = \Extchoice v\!\in\!\nat @
        \rc{\!\truer\!}{\!
        \begin{array}{l}
          \cspen{\ptrue}{\langle\rangle}{\{inp.v\}} \\
          \lor \false
        \end{array} \!
      }{\!
        \begin{array}{l}
          \csppf{\ptrue}{id}{\langle inp.v \rangle} \relsemi \\
          \csppf{\ptrue}{\llparenthesis bf\!\mapsto\!bf\!\cat\!\langle v \rangle \rrparenthesis}{\langle\rangle}
        \end{array}
      } & [\ref{thm:rc-comp}, \ref{thm:evwp}] \\[1ex]
    & = \Extchoice\! v\!\in\!\nat @\! \rc{\!\truer\!}{\!\cspen{\ptrue}{\snil}{\{\!inp.v\!\}}\!}{\!\csppf{\ptrue}{\llparenthesis bf\!\mapsto\!bf\!\cat\!\langle v \rangle \rrparenthesis}{\langle inp.v \rangle}\!} & [\ref{thm:crel-comp}] \\[1ex]
%    & = \rc{\truer}{\bigwedge_{v \in \nat} ~ \cspen{\ptrue}{\snil}{\{inp.v\}}}{\bigvee_{v \in \nat} \csppf{\ptrue}{\llparenthesis bf \mapsto bf \cat \langle v \rangle \rrparenthesis}{\langle inp.v \rangle}} \\
    & = \rc{\!\truer\!}{\!\cspen{\ptrue}{\snil}{\!\bigcup_{v \in \nat} \{\!inp.v\!\}}\!\!}{\!\bigvee_{v \in \nat} \!\csppf{\ptrue}{\llparenthesis bf \!\mapsto\! bf\!\cat\!\langle v \rangle \rrparenthesis}{\langle inp.v \rangle}\!} & [\ref{def:ext-choice},\ref{thm:filtering}]
  \end{align*}
\end{proof}

\noindent Though this calculation seems complicated, in practice it is fully automated and thus a user need not be
concerned with these minute calculational details, but can rather focus on finding suitable reactive
invariants. \qed

Then, by Theorem~\ref{thm:whilecalc} we can calculate the overall behaviour of the buffer.
$$Buffer = \rc{\truer}{\csppf{true}{\{bf \mapsto \langle\rangle\}}{\langle\rangle} \relsemi B_3\bm{\star} \relsemi
  B_2}{\false}$$

\noindent This is a non-terminating contract where every quiescent behaviour begins with an empty buffer, performs some
sequence of buffer inputs and outputs accompanied by state updates ($B_3\bm{\star}$), and is finally offering the
relevant input and output events ($B_2$). We can now employ Theorem~\ref{thm:rea-inv} to verify the buffer. First, we
tackle deadlock freedom, which can be proved using the following refinement.

\begin{theorem}[Deadlock Freedom] $$\textstyle\rc{\truer}{\bigvee_{s, t, E, e}~~ \cspen{s}{t}{\{e\} \cup E}}{\truer} \refinedby Buffer$$
\end{theorem}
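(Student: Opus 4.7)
The plan is to apply the standard refinement decomposition for \healthy{NSRD} contracts, which reduces $\rc{P_1}{P_2}{P_3} \refinedby \rc{Q_1}{Q_2}{Q_3}$ to three obligations: $Q_1 \refinedby P_1$, $P_2 \refinedby (Q_2 \land P_1)$, and $P_3 \refinedby (Q_3 \land P_1)$. For deadlock freedom this yields $\truer \refinedby \truer$, a substantive pericondition obligation, and $\truer \refinedby \false$; the first and third are immediate since $\false$ refines every relation.

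Only the pericondition matters. We must show that $\bigvee_{s,t,E,e}\cspen{s}{t}{\{e\} \cup E}$, which semantically asserts that the refused set is not the entire event universe, is refined by $\csppf{\ptrue}{\substmap{bf \mapsto \snil}}{\snil} \relsemi B_3\bm{\star} \relsemi B_2$. The crucial observation is that the terminal $B_2$ always offers every input event $inp.v$, irrespective of the intervening state evolution, because $\bigcup_{v \in \nat}\{inp.v\}$ does not mention $bf$. My plan is to push the preceding compositions into $B_2$ using Theorem~\ref{thm:crel-comp}-\eqref{thm:crc3}: for any $\Phi$-atom $\csppf{b}{\sigma}{t}$, the product $\csppf{b}{\sigma}{t} \relsemi B_2$ is an $\mathcal{E}$-atom whose enabled set is $\substapp{\sigma}{(\bigcup_v\{inp.v\} \cup \ldots)}$, and $\sigma$ leaves $inp.0$ in place. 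The spec's existential is then witnessed by $e := inp.0$, with $E$ the remainder of the enabled set.

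The main obstacle is the Kleene iteration $B_3\bm{\star}$, which in principle ranges over all $n$-fold compositions. Rather than inducting on $n$ explicitly, I would exploit closure of $\Phi$-disjunctions under \eqref{thm:crc2} together with the unfold law of Theorem~\ref{thm:kalaws}: each power $B_3^n$ is again a $\Phi$-disjunction, so $B_3^n \relsemi B_2$ is an $\mathcal{E}$-disjunction whose branches all contain $inp.0$, and distributivity of $\relsemi$ over $\bigvee$ lets us reduce $B_3\bm{\star} \relsemi B_2$ uniformly. In Isabelle/UTP I expect \textsf{rdes-refine} together with the simplification rules of \S\ref{sec:circus-rc} to dispatch the whole goal once this pattern is packaged as a lemma about composition with $B_2$, leaving only the trivial set-membership fact $inp.0 \in \bigcup_{v \in \nat}\{inp.v\}$.
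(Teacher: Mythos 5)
Your decomposition and the substance of your argument are right, and they match what the paper's automation must be doing: the paper itself gives no explicit proof here (it only reports that \textsf{rdes-refine} discharges the goal in 1.8s), so your proposal is essentially a manual unfolding of the advertised strategy --- calculate the contract, reduce the pericondition to a disjunction of $\mathcal{E}$-atoms via Theorem~\ref{thm:crel-comp}, and witness the specification's existential with $inp.0$, which survives every state substitution because $\bigcup_{v}\{inp.v\}$ does not mention $bf$. The three-way split with $P_1 = \truer$ making obligations (1) and (3) trivial is exactly right (though note obligation (1) is discharged by reflexivity of $\refinedby$, not by ``$\false$ refines everything'', which only justifies obligation (3)).

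The one loose step is your treatment of $B_3\bm{\star}$. You say you will avoid inducting on $n$ by ``exploiting closure of $\Phi$-disjunctions under \eqref{thm:crc2}'', but the claim that \emph{every} power $B_3^n$ is a $\Phi$-disjunction is itself only provable by induction on $n$, so you have not actually avoided the induction --- you have just left it implicit. A genuinely induction-free route, and the one the paper's own machinery suggests, is to use the star-induction axiom of Definition~\thmeqref{def:wka}{thm:starinductr} (equivalently, Theorem~\ref{thm:rea-inv} with the vacuous guard $b = true$): take the deadlock-freedom pericondition $D = \bigvee_{s,t,E,e}\cspen{s}{t}{\{e\}\cup E}$ itself as the reactive invariant, and discharge the two base facts $D \refinedby B_2$ and $D \refinedby B_3 \relsemi D$, the latter following in one application of \eqref{thm:crc3} since composing a $\Phi$-atom onto $\cspen{s}{t}{\{e\}\cup E}$ yields an $\mathcal{E}$-atom whose enabled set is still of the form $\{e'\}\cup E'$. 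Either way the theorem goes through; the invariant formulation is just the cleaner packaging of the argument you are reaching for.
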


\noindent Since only quiescent observations can deadlock, we only constrain the pericondition. It
characterises observations where at least one event $e$ is being accepted: there is no deadlock. This theorem can be
discharged automatically in 1.8s on an Intel i7-4790 desktop machine. We next tackle the second property.

% From this, and by application of Theorem~\ref{thm:rc-iter} plus definitional contract theorems, we can calculate the
% overall behaviour of the buffer, which is
% $$Buffer = \rc{\truer}{\csppf{true}{\{bf \mapsto \langle\rangle\}}{\langle\rangle} \relsemi B_3\bm{\star} \relsemi B_2}{\false}$$
% and from this, we can now prove the following property of the buffer.
\begin{theorem}[Buffer Order Property] The sequence of items output is a prefix of those that were previously
  input. This can be formally expressed as
  $$\rc{\truer}{outps(\trace) \le inps(\trace)}{\truer} \refinedby Buffer$$ where $inps(t), outps(t) : \seq\,\nat$ extract the sequence of
  input and output elements from the trace $t$, respectively. The postcondition is left unconstrained as $Buffer$ does not terminate.
\end{theorem}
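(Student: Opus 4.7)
The plan is to apply Theorem~\ref{thm:rea-inv} to the while loop inside $Buffer$, and then to propagate the resulting invariant contract through the initial $bf := \langle\rangle$ by assignment-distribution, Theorem~\thmeqref{thm:sfdl}{thm:asgdist}. Since the guard of the loop is the literal $true$, the $\rasm{\neg b}$ branch and the weakest-precondition obligation of Theorem~\ref{thm:rea-inv} collapse, and the postcondition invariant can simply be taken as $I_3 \defs \truer$. What remains is to choose a pericondition invariant $I_2$ that is established by $B_2$ and preserved by $B_3$.

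The naive candidate $outps(\trace) \le inps(\trace)$ is not preserved by $B_3$ in isolation, because mid-loop the initial $bf$ of a single iteration may be non-empty. I propose the strengthened invariant
$$I_2 ~\defs~ outps(\trace) \le bf \cat inps(\trace),$$
which captures the intuition that every item output so far either came from the buffer's initial contents or was supplied by a subsequent input. Establishment $I_2 \refinedby B_2$ is immediate, since $B_2$ fixes $\trace = \langle\rangle$ and both sides then collapse to a trivial inequality. For preservation $I_2 \refinedby B_3 \relsemi I_2$, I would split via Theorem~\ref{thm:crel-comp} on the two disjuncts of $B_3$. The input branch extends $bf$ by $v$ and the trace by $inp.v$, so $inps$ gains $v$ on the right-hand side while $outps$ is unchanged. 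The output branch (guarded by $\#bf > 0$) shrinks $bf$ to $tail(bf)$ and appends $out.head(bf)$, and the identity $bf = \langle head(bf) \rangle \cat tail(bf)$ threads the prefix through. Both cases reduce to routine list-algebra about $\cat$, $head$, and $tail$.

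With the side-conditions of Theorem~\ref{thm:rea-inv} discharged, I would obtain $\rc{\truer}{I_2}{\truer} \refinedby \while{true}{\rc{\truer}{B_2}{B_3}}$. Pre-composing both sides with $bf := \langle\rangle$ and applying Theorem~\thmeqref{thm:sfdl}{thm:asgdist} to distribute the assignment on the left, the invariant contract collapses to $\rc{\truer}{outps(\trace) \le inps(\trace)}{\truer}$, exactly the desired specification; monotonicity of $\relsemi$ on the right finishes the argument. The main obstacle is the invariant-strengthening step: one must recognise that the loop-local invariant has to carry $bf$ as an additive prefix to $inps(\trace)$, since within the body $bf = \langle\rangle$ cannot be assumed. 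Once this is spotted, the remainder is mechanical and ought to be dispatched by \textsf{rdes-refine} together with the Isabelle list simplifier.
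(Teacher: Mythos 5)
Your proposal is correct and follows essentially the same route as the paper: the same strengthened invariant $outps(\trace) \le bf \cat inps(\trace)$, the same reduction to case (2) of Theorem~\ref{thm:rea-inv} with the other obligations vacuous under the $true$ guard, and the same final step of distributing $bf := \langle\rangle$ via Theorem~\thmeqref{thm:sfdl}{thm:asgdist} so that the invariant collapses to $outps(\trace) \le inps(\trace)$, concluding by monotonicity of $\relsemi$ and transitivity of $\refinedby$. The key insight you highlight --- that the invariant must carry $bf$ as a prefix because the body cannot assume an empty buffer --- is precisely the one the paper relies on.
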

  
\begin{proof}
  First, we identify the reactive invariant $I \defs outps(\trace) \le bf \cat inps(\trace)$, and show that
  $\rc{\!\truer\!}{\!I\!}{\!\truer\!} \refinedby \while{\truer}{\rc{\!\truer\!}{\!B_2\!}{\!B_3\!}}$. By
  Theorem~\ref{thm:rea-inv} it suffices to show case (2), that is $I \refinedby B_2$ and $I \refinedby B_3 \relsemi I$,
  as the other two cases are vacuous. These two properties can be discharged by relational calculus. Second, we prove
  that
  $\rc{\!\truer\!}{outps(\trace) \le inps(\trace)}{\!\truer\!} \refinedby bf := \langle\rangle \relsemi
  \rc{\!\truer\!}{\!I\!}{\!\truer\!}$.
  This holds, by Theorem~\thmeqref{thm:sfdl}{thm:asgdist}, since
  $I[\langle\rangle/bf] = outps(\trace) \le inps(\trace)$.  Thus, the overall theorem holds by monotonicity of
  $\relsemi$ and transitivity of $\refinedby$. The proof is semi-automatic --- since we have to manually apply induction
  with Theorem~\ref{thm:rea-inv} --- with the individual proof steps taking 2.2s in total. \qed

  % It suffices to show that
  % $$(outps(\trace) \le inps(\trace)) ~~ \refinedby ~~ (\csppf{true}{\{bf \mapsto \langle\rangle\}}{\langle\rangle} \relsemi
  % B_3\bm{\star} \relsemi B_2)$$
  % We identify the reactive invariant $I \defs outps(\trace) \le bf \cat inps(\trace)$. Then, by application of
  % the Kleene algebra right induction theorem, that is property \ref{thm:starinductr} of Definition~\ref{def:wka}, we can
  % show that this is satisfied by $B_3\bm{\star} \relsemi B_2$. Specifically, we need to show
  % (1) $I \refinedby B_3 \relsemi I$ and (2) $I \refinedby B_2$. These properties both hold, and $I$ is established by the
  % initial assignment, and so we are done. \qed
\end{proof}

\section{Conclusion}
\label{sec:concl}
We have demonstrated an effective verification strategy for reactive programs employing reactive relations and Kleene
algebra. Our theorems and verification tool can be found in our theory repository\footnote{Isabelle/UTP:
  \url{https://github.com/isabelle-utp/utp-main}}, together with companion proofs.

Related work includes the works of Struth \emph{et al.} on verification of imperative
programs~\cite{Armstrong2015,Gomes2016} using Kleene algebra for verification-condition generation, which our work
heavily draws upon. Automated proof support for the failures-divergences model was previously provided by the CSP-Prover
tool~\cite{Isobe2008}, which can be used to verify infinite-state systems in CSP. Our work is different both in its
contractual semantics, and also in our explicit handling of state, which allows us to express variable assignments. 

Our work also lies within the ``design-by-contract'' field~\cite{Meyer92}. The refinement calculus of reactive
systems~\cite{Preoteasa2014} is a language based on property transformers containing trace information. Like our work,
they support reactive systems that are non-deterministic, non-input-receptive, and infinite state. The main differences
are our handling of state variables, the basis in relational calculus, and our failures-divergences
semantics. Nevertheless, our contract framework~\cite{Foster17c} can be linked to those results, and we expect to derive
an assume-guarantee calculus.

In future work, we will extend our calculation strategy to parallel composition. We aim to apply it to more substantial
examples, and are currently using it to build a prototype tactic for verifying robotic
controllers~\cite{Miyazawa2017}. In this direction, our semantics and techniques will be also be extended to cater for
real-time, probabilistic, and hybrid computational behaviours~\cite{Foster17b}.

% \begin{itemize}
%   \item calculation of parallel composition;
%   \item application to verification of state charts for robot controllers;
%   \item application of semantic model to other languages;
%   \item extension of semantic model to real-time and probabilistic relations.
% \end{itemize}

\section*{Acknowledgments}

This research is funded by the RoboCalc
project\footnote{RoboCalc Project: \url{https://www.cs.york.ac.uk/circus/RoboCalc/}}, EPSRC
grant EP/M025756/1.

\bibliographystyle{splncs}
\bibliography{RAMICS2018}

\appendix
\section{UTP Theory Definitions}
\label{sec:appendix}
In this appendix, we summarise our theory of reactive design contracts. The definitions are all mechanised in
accompanying Isabelle/HOL reports~\cite{Foster-RDES-UTP,Foster-SFRD-UTP}.

\subsection{Observational Variables}

We declare two sets $\tset$ and $\Sigma$ that denote the sets of traces and state spaces, respectively, and operators
$\tcat : \tset \to \tset \to \tset$ and $\tempty : \tset$. We require that $(\tset, \tcat, \tempty)$ forms a trace
algebra~\cite{Foster17b}, which is a form of cancellative monoid. Example models include $(\nat, +, 0)$ and
$(\seq\,A, \cat, \langle\rangle)$. We declare the following observational variables that are used in both our UTP
theories:
\begin{itemize}[]
  \item $ok, ok' : \Bool$ -- indicate divergence in the prior and present relation;
  \item $wait, wait' : \Bool$ -- indicate quiescence in the prior and present relation;
  \item $\state, \state' : \Sigma$ -- the initial and final state;
  \item $tr, tr' : \tset$ -- the trace of the prior and present relation.
\end{itemize}
Since the theory is extensible, we also allow further observational variables to be added, which are denoted by the
variables $r$ and $r'$.

\subsection{Healthiness Conditions}

We first describe the healthiness conditions of reactive relations.
\begin{definition}[Reactive Relation Healthiness Conditions]
\begin{align*}
  \healthy{R1}(P) & ~\defs~ P \land tr \le tr' \\
  \healthy{R2}_c(P) & ~\defs~ \conditional{P[\tempty, tr' \tminus tr/tr, tr']}{tr \le tr'}{P} \\
  \healthy{RR}(P) & ~\defs~ \exists (ok, ok', wait, wait') @ \healthy{R1}(\healthy{R2}_c(P))
\end{align*}
\end{definition}
$\healthy{RR}$ healthy relations do not refer to $ok$ or $wait$ and have a well-formed trace associated with them. The
latter is ensured by the reactive process healthiness conditions~\cite{Hoare&98,Cavalcanti&06,Foster17b}, $\healthy{R1}$
and $\healthy{R2}_c$, which justify the existence of the trace pseudo variable $\trace \defs tr' - tr$. $\healthy{RR}$
is closed under relational calculus operators $\false$, $\lor$, $\land$, and $\relsemi$, but not $\true$, $\neg$,
$\implies$, or $\II$. We therefore define healthy versions below.
\begin{definition}[Reactive Relation Operators]
\vspace{-1.2ex}

\begin{minipage}{.3\textwidth}
  \begin{align*}
    \truer  &~\defs~ \healthy{R1}(true) \\
    \negr P &~\defs~ \healthy{R1}(\neg P) \\
    P \rimplies Q &~\defs~ \negr P \lor Q
  \end{align*}
\end{minipage}
\begin{minipage}{.65\textwidth}
  \begin{align*}
    P \wpR Q &~\defs~ \negr (P \relsemi (\negr Q)) \\
    \IIr &~\defs~ (tr' = tr \land \state' = \state \land r' = r)
  \end{align*}
\end{minipage}
\end{definition}
We define a reactive complement $\negr P$, reactive implication $P \rimplies Q$, and reactive true $\truer$, which with
the other connectives give rise to a Boolean algebra~\cite{Foster17c}. We also define the reactive skip $\IIr$, which is
the unit of $\relsemi$, and the reactive weakest precondition operator $\wpR$. The latter is similar to the standard UTP
definition of weakest precondition~\cite{Hoare&98}, but uses the reactive complement.

We next define the healthiness conditions of reactive contracts.
\begin{definition}[Reactive Designs Healthiness Conditions]

\begin{minipage}{.2\textwidth}
\begin{align*}
  \healthy{R3}_h(P) & ~~\defs~~ \conditional{\IIsrd}{wait}{P} \\
  \healthy{RD1}(P) &~~\defs~~ ok \rimplies P \\
  \healthy{RD2}(P) &~~\defs~~ P \relsemi \ckey{J} \\
  \healthy{RD3}(P) &~~\defs~~ P \relsemi \IIsrd
\end{align*}
\end{minipage}
\begin{minipage}{.6\textwidth}
\begin{align*}
  \IIsrd            & ~~\defs~~ \healthy{RD1}(\conditional{(\exists \state @ \IIr)}{wait}{\IIr}) \\
  \healthy{R}_s     & ~~\defs~~  \healthy{R1} \circ \healthy{R2}_c \circ \healthy{R3}_h \\
  \healthy{SRD}(P) &~~\defs~~ \healthy{RD1} \circ \healthy{RD2} \circ \healthy{R}_s \\
  \healthy{NSRD}(P) &~~\defs~~ \healthy{RD1} \circ \healthy{RD3} \circ \healthy{R}_s
\end{align*}
\end{minipage}
\end{definition}
$\healthy{R3}_h$ states that if the predecessor is waiting then a reactive design behaves like $\IIsrd$, the reactive
design identity. $\healthy{RD1}$ is analagous to $\healthy{H1}$ from the theory of
designs~\cite{Hoare&98,Cavalcanti&06}, and introduces divergent behaviour: if the predecessor is divergent ($\neg ok$),
then a reactive design behaves like $\truer$ meaning that the only observation is that the trace is
extended. $\healthy{RD2}$ is identical to $\healthy{H2}$ from the theory of
designs~\cite{Hoare&98,Cavalcanti&06}. $\healthy{RD3}$ states that $\IIsrd$ is a right unit of sequential
composition. $\healthy{R}_s$ composes the reactive healthiness conditions and $\healthy{R3}_h$. We then finally have the
healthiness conditions for reactive designs: $\healthy{SRD}$ for ``stateful reactive designs'', and $\healthy{NSRD}$ for
``normal stateful reactive designs''.

Next we define the reactive contract operator.
\begin{definition}[Reactive Contracts]
  \begin{align*}
    \design{P}{Q} &~~\defs~~ (ok \land P) \implies (ok' \land Q) \\
    P \wcond Q    &~~\defs~~ \conditional{P}{wait'}{Q} \\
    \rc{\!P\!}{\!Q\!}{\!R}  &~~\defs~~ \ckey{R}_s(P \vdash Q \diamond R)
  \end{align*}
\end{definition}
A reactive contract is a ``reactive design''~\cite{Cavalcanti&06,Oliveira&09}. We construct a UTP design~\cite{Hoare&98}
using the design turnstile operator, $\design{P}{Q}$, and then apply $\healthy{R}_s$ to the resulting construction. The
postcondition of the underlying design is split into two cases for $wait'$ and $\neg wait'$, which indicate whether the
observation is quiescent, and correspond to the peri- or postcondition.

Finally, we define the healthiness conditions that specialise our theory to stateful-failure reactive designs.
\begin{definition}[Stateful-Failure Healthiness Conditions]
  \begin{align*}
    \Skip             &~~\defs~~ \rc{\truer}{\false}{\trace = \snil \land \state' = \state} \\
    \healthy{CSP3}(P) &~~\defs~~ \Skip \relsemi P \\
    \healthy{CSP4}(P) &~~\defs~~ P \relsemi \Skip \\
    \healthy{NCSP}(P) &~~\defs~~ \healthy{NSRD} \circ \healthy{CSP3} \circ \healthy{CSP4}
  \end{align*}
\end{definition}
$\Skip$ is similar to $\IIsrd$, but does not refer to $ref$ in the postcondition. If $P$ is $\healthy{CSP3}$ healthy
then it cannot refer to $ref$. If $P$ is $\healthy{CSP4}$ healthy then the postcondition cannot refer to $ref'$, but the
pericondition can: refusals are only observable when $P$ is quiescent~\cite{Hoare85,Cavalcanti&06}.

%  \ckey{J}          & ~~\defs~~ (ok \implies ok') \land \II_{A \setminus \{ok\}} \\
%  \IIsrd            & ~~\defs~~ \conditional{(\conditional{(\exists st @ \II)}{wait}{\II})}{ok}{\truer} \\

% that's all folks
\end{document}